\theoremstyle{thmstyleone}%
\newtheorem{theorem}{Theorem}
\newtheorem{lemma}[theorem]{Lemma}
\newtheorem{corollary}[theorem]{Corollary}
\theoremstyle{thmstyletwo}%
\theoremstyle{thmstylethree}%
\newtheorem{definition}{Definition}%
\begin{document}
	
	\title{A Logical Formalism of Hardy-type Paradox}
	
	\author[1]{\fnm{Songyi} \sur{Liu}}\email{liusongyi@buaa.edu.cn}
	
	\author*[1]{\fnm{Yongjun} \sur{Wang}}\email{wangyj@buaa.edu.cn}
	
	\author[1]{\fnm{Baoshan} \sur{Wang}}\email{bwang@buaa.edu.cn}
	
	\author[1]{\fnm{Chang} \sur{He}}\email{hechang@buaa.edu.cn}
	
	\author[1]{\fnm{Yunyi} \sur{Jia}}\email{by2309005@buaa.edu.cn}
	
	\affil*[1]{\orgdiv{School of Mathematical Sciences}, \orgname{Beihang University}, \orgaddress{ \city{Beijing}, \postcode{100191}, \country{China}}}
	
	\abstract{Hardy-type paradoxes provide elegant, inequality-free proofs of quantum contextuality. We introduce a unified logical formalism for these paradoxes, termed logical Hardy-type paradoxes. For any finite quantum scenario of ideal measurements, we prove that the existence of a logical Hardy-type paradox is equivalent to logical contextuality. Specifically, strong contextuality is equivalent to logical Hardy-type paradoxes with success probability $\mathrm{SP} = 1$. These results generalize prior work on $(2,k,2)$, $(2,2,d)$, and $n$-cycle scenarios. We analyze logical Hardy-type paradoxes in the Mansfield and Klyachko-Can-Binicio\u{g}lu-Shumovsky (KCBS) scenarios. In the KCBS scenario, we show that there is exactly one type of logical Hardy-type paradox, achieving $\mathrm{SP}\approx 10.56\%$ for a specific parameter setting.}
	
	\keywords{Hardy-type paradox, Logical contextuality, Quantum logic, Partial Boolean algebra}
	
	
	
	\maketitle
	
	\section*{Declarations}
	
	\noindent\textbf{Competing interests.}
	The authors have no relevant financial or non-financial interests to disclose.
	
	\vspace{10pt}
	
	\noindent\textbf{Acknowledgments.}
	The work was supported by National Natural Science Foundation of China (Grant No. 12371016, 11871083) and National Key R\&D Program of China (Grant No. 2020YFE0204200).
	
	\section{Introduction}
	
	The Bell-Kochen-Specker (BKS) theorem \cite{Kochen1967The} establishes quantum contextuality, demonstrating the incompatibility of quantum mechanics with hidden-variable theories \cite{Budroni2022Kochen}. Bell nonlocality \cite{Brunner2014Bell} is a special case of contextuality for spacelike-separated systems, which is also useful to interpret semantic paradoxes \cite{Zhou2024Quantum}. While early proofs of nonlocality relied on inequalities \cite{Bell1964On, Clauser1969Proposed}, Hardy introduced an elegant, inequality-free proof using a logical paradox. Hardy-type paradoxes exploit quantum-realizable logical contradictions that are classically impossible \cite{Hardy1992Quantum,Hardy1993Nonlocality}. This formulation is considered one of the simplest proofs of Bell nonlocality \cite{Mermin1994Quantum} and has been verified in numerous experiments \cite{Boschi1997Ladder,Irvine2005Realization,Barbieri2005Test,Marques2014Experimental,Luo2018Experimental,Hou2021Quantum}.
	
	The original Hardy paradox, set in the $(2,2,2)$ Bell scenario, achieves a maximum success probability of $\mathrm{SP}_{\max} \approx 9\%$ \cite{Hardy1993Nonlocality}. Subsequent generalizations have extended it to broader scenarios. For $(n,2,2)$ Bell scenarios involving Greenberger-Horne-Zeilinger (GHZ) states, the maximum success probability is $12.5\%$ for $n=3$ \cite{Cereceda2004Hardy}, approaching approximately $15.6\%$ asymptotically \cite{Minh2023Increased}. Further work has extended Hardy-type paradoxes to $(2,k,2)$ and $(2,2,d)$ scenarios \cite{Boschi1997Ladder,Mansfield2012Hardy,Chen2013Hardy}, culminating in a unified treatment for $(2,k,d)$ scenarios \cite{Meng2018Hardy}. This yields $\mathrm{SP}_{\max} \approx 40.2\%$ for the $(2,5,3)$ case, surpassing earlier bounds \cite{Cabello1998Ladder}. Beyond Bell scenarios, contextuality theory provides a framework for further generalization \cite{Budroni2022Kochen}. For instance, Hardy-type paradoxes have been formulated for $n$-cycle scenarios using exclusivity graphs \cite{Cabello2013Simple}, with the maximum success probability for any 5-cycle scenario shown to be $SP_{\max} = 1/9 \approx 11.1\%$ \cite{Santos2021Conditions}.
	
	Despite numerous known examples, a unified mathematical formalization of Hardy-type paradoxes for general quantum scenarios remains lacking. For instance, although a generalized Hardy-type paradox has been presented for the $5$-cycle scenario, specific cases such as the Klyachko–Can–Binicio\u{g}lu–Shumovsky (KCBS) scenario \cite{Alexander2008Simple}, which is the simplest $5$-cycle scenario, have not been analyzed. In this work, we show that the KCBS scenario admits exactly one type of Hardy-type paradox.
	
	Mansfield et al. proved that the existence of Hardy-type paradoxes on $(2,k,2)$ and $(2,2,d)$ Bell scenarios is equivalent to logical contextuality \cite{Mansfield2012Hardy}, a notion introduced within the sheaf-theoretic approach \cite{Abramsky2011sheaf}. This equivalence was later extended to $n$-cycle scenarios \cite{Santos2021Conditions}. Since logical contextuality can be verified algorithmically, this equivalence provides a powerful tool for identifying Hardy-type paradoxes. However, \cite{Mansfield2013The} also exhibited a logically contextual state on the $(2,3,3)$ scenario that does not witness any coarse-grained Hardy-type paradox, suggesting the equivalence may not hold in general. Nevertheless, for ideal measurements, we prove that a more general notion, termed the logical Hardy-type paradox, is precisely equivalent to logical contextuality on any finite quantum scenario.
	
	We work within the framework of exclusive partial Boolean algebras (epBAs) \cite{Kochen1967The, Abramsky2020The, Liu2025Atom}, an extension of standard quantum logic \cite{Birkhoff1936The}. An epBA describes a scenario satisfying Specker’s principle \cite{Specker1960Die}, the no-signaling \cite{Popescu1994Quantum} and the logical exclusivity principle \cite{Abramsky2020The}, which held by quantum scenarios of ideal measurements. Its key idea is to treat measurement events as fundamental, and binary operations are defined only for compatible events. In the following sections, we introduce a unified formalism called logical Hardy-type paradoxes within the epBA framework. We establish their equivalence to logical contextuality, and illustrate the results with analyses of the Mansfield, Bell, and KCBS scenarios. This paper is restricted to ideal measurements scenarios.
	
	\section{Preliminaries}\label{sec-pre}
	
	This section introduces the necessary concepts of partial Boolean algebras in the field of quantum contextuality. For further details, see \cite{Kochen1967The,Van2012Noncommutativity,Abramsky2020The,Liu2025Atom,Liu2025The}.
	
	An experimental setup consists of two components: a scenario and a state. The scenario describes the structure of measurements and observable events. If the experiment satisfies no-signaling and Specker's principle (i.e., pairwise compatibility implies global compatibility), as is the case for ideal quantum measurements, its scenario can be described by a partial Boolean algebra (pBA), defined as follows.
	
	\begin{definition}
		A \textbf{partial Boolean algebra} (pBA) is a structure $(\mathcal{A},\odot,\lnot,\land, 0, 1)$ where:
		\begin{enumerate}
			\item $\mathcal{A}$ is a set;
			\item $\odot\subseteq\mathcal{A}\times\mathcal{A}$ is a reflexive, symmetric binary relation (compatibility);
			\item $\lnot:\mathcal{A}\rightarrow\mathcal{A}$ is a total unary operation;
			\item $\land:\odot\rightarrow\mathcal{A}$ is a partial binary operation;
			\item $0, 1 \in\mathcal{A}$ are the bottom and top elements,
		\end{enumerate}
		such that every pairwise-compatible subset $S \subseteq\mathcal{A}$ (i.e., $a \odot b$ for all $a, b \in S$) is contained in a Boolean subalgebra $\mathcal{B} \subseteq\mathcal{A}$, where the operations of $\mathcal{B}$ are the restrictions of $\lnot$ and $\land$.
	\end{definition}
	
	A state on a pBA assigns probabilities to events.
	
	\begin{definition}\label{def-pba_state}
		Let $\mathcal{A}$ be a pBA. A \textbf{state} on $\mathcal{A}$ is a function $p: \mathcal{A} \to [0, 1]$ satisfying, for all $a,b\in\mathcal{A}$:
		\begin{enumerate}
			\item $p(0) = 0$ and $p(1) = 1$;
			\item $p(\neg a) = 1 - p(a)$;
			\item $p(a) + p(b) = p(a \land b) + p(a \lor b)$ if $a \odot b$.
		\end{enumerate}
		A state $p$ is \textbf{deterministic} if $p(a)\in\{0,1\}$ for all $a \in\mathcal{A}$. The set of all states on $\mathcal{A}$ is denoted $s(\mathcal{A})$; the set of deterministic states is $s_d(\mathcal{A})$.
	\end{definition}
	
	Thus, a pair $(\mathcal{A}, p)$ fully describes a general static experiment.
	
	Define the partial order $\leq$ and the exclusivity relation $\perp$ on a pBA $\mathcal{A}$ by:
	\begin{equation}
		\begin{split}
			a \leq b &\quad\text{if}\quad a \land b = a,\\
			a \perp b &\quad\text{if}\quad a \leq c \text{ and } b \leq \neg c \text{ for some } c \in \mathcal{A}.
		\end{split}
	\end{equation}
	
	\begin{definition}\label{def-LEP}
		A pBA $\mathcal{A}$ satisfies the \textbf{logical exclusivity principle (LEP)} if every pair of exclusive events is compatible, i.e., ${\perp} \subseteq {\odot}$. A pBA satisfying LEP is called an \textbf{exclusive partial Boolean algebra (epBA)}.
	\end{definition}
	
	LEP implies several useful properties \cite{Abramsky2020The}: it entails the probability exclusivity principle (PEP) \cite{Fritz2013Local,Adan2014Graph} and is equivalent to transitivity. An important motivation for LEP is the following result. For a finite pBA $\mathcal{A}$, let a simple graph $\mathcal{G}_a(\mathcal{A})$ be its \textbf{atom graph}, whose vertices are the atoms of $\mathcal{A}$ and edges represent compatibility. A function $s:\mathcal{G}_a(\mathcal{A})\to [0,1]$ is a \textbf{state} on $\mathcal{G}_a(\mathcal{A})$ if $\sum_{v\in C} s(v)=1$ for every maximal clique $C \subseteq \mathcal{G}_a(\mathcal{A})$.
	
	\begin{theorem}[\cite{Liu2025Atom}]\label{thm:atom_graph_correspondence}
		Let $\mathcal{A}$ and $\mathcal{A}'$ be finite epBAs. Then $\mathcal{A}\cong\mathcal{A}'$ if and only if $\mathcal{G}_a(\mathcal{A}) \cong \mathcal{G}_a(\mathcal{A}')$. Moreover, every state on $\mathcal{G}_a(\mathcal{A})$ extends uniquely to a state on $\mathcal{A}$.
	\end{theorem}
	
	Thus, a finite epBA is completely determined by its atom graph, which simplifies many questions.
	
	\begin{definition}
		A \textbf{(finite) general system} is a pair $(\mathcal{A}, p)$, where $\mathcal{A}$ is a (finite) epBA and $p \in s(\mathcal{A})$.
	\end{definition}
	
	In quantum mechanics, observable events correspond to projectors. For projectors $\hat{P}, \hat{Q}$ on a Hilbert space $\mathcal{H}$, define:
	\begin{itemize}
		\item $\hat{P} \odot \hat{Q}$ iff $\hat{P}\hat{Q} = \hat{Q}\hat{P}$ (compatibility);
		\item $\lnot\hat{P} := \mathbf{I} - \hat{P}$ (orthogonal complement);
		\item $\hat{P} \land \hat{Q} := \hat{P}\hat{Q}$ when $\hat{P} \odot \hat{Q}$ (intersection);
		\item $\mathbf{I}$ and $\mathbf{0}$ denote the identity and zero operators.
	\end{itemize}
	These operations endow the set of projectors $\mathbf{P}(\mathcal{H})$ with the structure of an epBA, denoted $(\mathbf{P}(\mathcal{H}),\odot,\neg,\land,\mathbf{0},\mathbf{I})$. Subalgebras of $\mathbf{P}(\mathcal{H})$ are called \emph{quantum scenarios}. A quantum scenario $\mathcal{Q}$ together with a density operator $\rho$ describes a static quantum experiment.
	
	\begin{definition}
		A \textbf{quantum system} is a pair $(\mathcal{Q},p)$, where $\mathcal{Q}\subseteq\mathbf{P}(\mathcal{H})$ is a quantum scenario and $p$ is the state induced by $\rho$ via $p(\hat{P})=\operatorname{Tr}(\rho\hat{P})$.
	\end{definition}
	
	For simplicity, we use the same symbol $\rho$ for the density operator and the induced state. Any quantum system is a general system \cite{Liu2025The}.
	
	In an epBA $\mathcal{A}$, for compatible events $a \odot b$, the disjunction is defined via De Morgan's law:
	\begin{equation}
		a \lor b :=\lnot(\lnot a \land\lnot b).
	\end{equation}
	If $\mathcal{A}$ embeds into a Boolean algebra, there exists a canonical embedding into $\mathcal{A}^c :=\mathcal{P}(s_d(\mathcal{A}))$ \cite{Liu2025The}:
	\begin{equation}
		\begin{split}
		\label{eq-classical-embedding}
		i_{\mathcal{A}}:\mathcal{A}&\to\mathcal{A}^c\\
		e&\mapsto e^c:=\{\lambda\in s_d(\mathcal{A}): \lambda(e)=1\}.
		\end{split}
	\end{equation}
	This embedding facilitates comparison between quantum and classical logic. For consistency, we use logical notation in $\mathcal{A}^c$: for $E, F \subseteq s_d(\mathcal{A})$:
	\begin{equation}
		\neg E := s_d(\mathcal{A})\setminus E;\quad
		E \land F := E \cap F;\quad
		E \lor F := E \cup F;\quad
		E\leq F \;\Leftrightarrow\; E\subseteq F.
	\end{equation}
	For notational convenience, the logical bottom element is uniformly denoted $\bot$ (representing $0_{\mathcal{A}}$ in $\mathcal{A}$, $\mathbf{0}$ in quantum scenarios, or $\emptyset$ in $\mathcal{A}^c$).
	
	\section{Logical Hardy-type paradox}\label{sec-LHp}
	
	Hardy-type paradoxes provide inequality-free proofs of quantum contextuality. They are characterized by a set of conditions that logically imply a certain event; a contradiction arises when this event is experimentally observed. For example, in the $(2,2,2)$ Bell scenario, a Hardy-type paradox is given by the probability constraints \cite{Chen2024Hardy}:
	\begin{equation}
		\label{original Hardy paradox}
		\begin{split}
			P(0,0|0,0) = 0&,\quad P(1,1|0,1) = 0, \\
			P(1,1|1,0) = 0&,\quad P(1,1|1,1) = q>0.
		\end{split}
	\end{equation}
	Here, $P(x,y|i,j)$ is the joint probability of outcomes $x$ (Alice) and $y$ (Bob) given measurement settings $i$ and $j$. Let $A_i$ and $B_j$ denote the corresponding observables.
	
	In a classical (noncontextual) model, the first three constraints force $P(1,1|1,1)=0$. Hence, observing $q>0$ in a quantum experiment reveals contextuality. The value $q$ is called the \textbf{success probability (SP)} of the paradox.
	
	A zero probability event $e$ can be treated as logically false, so $\neg e$ is true. Defining $a_0,a_1,b_0,b_1$ as the events $A_0=1$, $A_1=1$, $B_0=1$, $B_1=1$, respectively, the probabilistic paradox \eqref{original Hardy paradox} translates into the logical form:
	\begin{equation}
		\label{original Hardy paradox2}
		\begin{aligned}
			&e_1 = \neg(\neg a_0 \land \neg b_0),\quad &e_2&=\neg(a_0 \land b_1), \\
			&e_3 = \neg(a_1 \land b_0),\quad &e_4&=a_1 \land b_1.
		\end{aligned}
	\end{equation}
	Then the paradox is captured by the implication $e_1 \land e_2 \land e_3 \leq \neg e_4$, which holds in any classical system. Therefore, under the premises $e_1,e_2,e_3$, the occurrence of $e_4$ witnesses quantum contextuality. This pattern, where local constraints imply a global conclusion that is violated by quantum mechanics, is often termed a \emph{failure of transitivity of implication (FTI)} \cite{Stapp1982Mind,Liang2011Specker} and represents the most common form of Hardy-type paradoxes.
	
	A more general formulation of Hardy-type paradoxes avoids the need to distinguish a priori between constraints and conclusions. In classical logic, for events $e$ and $f$, we say $e$ implies $\neg f$ (written $e \leq \neg f$) precisely when $e$ and $f$ cannot occur simultaneously, i.e., when $e \land f = \bot$, where $\bot$ denotes the impossible event.
	
	Thus, a Hardy-type paradox manifests as the observable joint occurrence of $e$ and $f$, despite their classical mutual exclusivity. This idea is also discussed in \cite{Santos2021Conditions}. The following lemma formalizes the equivalence.
	
	\begin{lemma}\label{lem-implication}
		Let $\mathcal{B}$ be a Boolean algebra and $e, f \in\mathcal{B}$. Then $e \leq \neg f$ iff $e \land f = \bot$.
	\end{lemma}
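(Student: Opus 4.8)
The plan is to prove the two implications separately by direct computation, using only the definition of the partial order supplied in the excerpt, namely $a \le b$ if and only if $a \land b = a$, together with the standard Boolean-algebra identities: the complement laws $f \land \neg f = \bot$ and $f \lor \neg f = 1_{\mathcal{B}}$ (where $1_{\mathcal{B}}$ is the top element), distributivity of $\land$ over $\lor$, and the identity/absorption laws $a \land 1_{\mathcal{B}} = a$ and $a \lor \bot = a$. The key point that makes this clean is that $\mathcal{B}$ is a genuine (total) Boolean algebra rather than a merely partial one, so every meet and join is defined and all these laws hold globally without any compatibility caveat.

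For the forward direction, I would assume $e \le \neg f$, which by definition unfolds to $e \land \neg f = e$. Meeting both sides with $f$ and using commutativity and associativity of $\land$ gives $e \land f = (e \land \neg f) \land f = e \land (\neg f \land f) = e \land \bot = \bot$, which is exactly the desired conclusion.

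For the converse, I would assume $e \land f = \bot$ and recover $e \le \neg f$ by expanding $e$ against the partition of unity $1_{\mathcal{B}} = f \lor \neg f$. Explicitly, $e = e \land 1_{\mathcal{B}} = e \land (f \lor \neg f)$, and distributivity yields $e = (e \land f) \lor (e \land \neg f) = \bot \lor (e \land \neg f) = e \land \neg f$. Since $e \land \neg f = e$ is, by definition, precisely the statement $e \le \neg f$, this closes the argument.

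I do not expect any genuine obstacle here: the lemma is elementary, and the entire content is bookkeeping with the Boolean axioms. The only thing worth being careful about is to justify each rewriting step from the algebraic axioms themselves rather than from set-theoretic intuition about events, since $\mathcal{B}$ is treated abstractly; in particular the backward direction depends essentially on the law of excluded middle $f \lor \neg f = 1_{\mathcal{B}}$, which is exactly the feature that can fail in a partial Boolean algebra and thereby gives rise to the Hardy-type paradoxes studied later in the paper.
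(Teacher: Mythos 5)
Your proof is correct and follows essentially the same route as the paper's: both directions are direct computations with the Boolean axioms, the forward one by meeting with $f$ and using $f \land \neg f = \bot$, the backward one by the excluded-middle/distributivity argument (the paper phrases it via $e \lor \neg f = \neg f$ rather than $e \land \neg f = e$, but these are equivalent characterizations of $e \le \neg f$). No gaps.
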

	\begin{proof}
		($\Rightarrow$) If $e \leq \neg f$, then $e \land f \leq \neg f \land f = \bot$, so $e \land f = \bot$. \\
		($\Leftarrow$) If $e \land f = \bot$, then $e = e \land (\neg f \lor f) = (e \land \neg f) \lor \bot = e \land \neg f$, hence $e \leq \neg f$.
	\end{proof}
	More generally, for $e_1,\dots,e_n \in \mathcal{B}$, we have $e_1 \land \cdots \land e_n \leq \neg f$ iff $e_1 \land \cdots \land e_n \land f = \bot$.
	
	Although a classical contradiction $e \land f$ would directly witness contextuality, such joint events are often incompatible in quantum mechanics and therefore unobservable. Hence, demonstrating Hardy-type contextuality typically requires that all but one of the quantum events occur with certainty.
	
	For example, the original Hardy paradox \cite{Hardy1992Quantum,Hardy1993Nonlocality} in logical form is given by:
	\begin{equation}
		\label{Hardy paradox}
		\begin{aligned}
			&e_1 = \neg(a_0 \land b_0),\quad &e_2& = \neg a_1 \to b_0, \\
			&e_3 = \neg b_1 \to a_0,\quad &e_4& = \neg a_1 \land \neg b_1,
		\end{aligned}
	\end{equation}
	where $x \to y := \neg x \lor y$. One can verify that $e_1 \land e_2 \land e_3 \land e_4 = \bot$. This paradox is witnessed by the quantum system in Hardy's state $\ket{\Psi}_{\textup{Hardy}}$:
	\begin{align}
		\ket{\Psi}_{\textup{Hardy}}
		&= N\left( AB\ket{a_0}\!\ket{\neg b_0} + AB\ket{\neg a_0}\!\ket{b_0} + B^2\ket{\neg a_0}\!\ket{\neg b_0} \right)\tag{\textup{H1}} \\
		&= N\left( \ket{a_1}(A\ket{b_0} + B\ket{\neg b_0}) - A^2(A^*\ket{a_1} - B\ket{\neg a_1})\ket{b_0} \right)\tag{\textup{H2}} \\
		&= N\left( (A\ket{a_0} + B\ket{\neg a_0})\ket{b_1} - A^2\ket{a_0}(A^*\ket{b_1} - B\ket{\neg b_1}) \right)\tag{\textup{H3}} \\
		&= N\left( \ket{a_1}\!\ket{b_1} - A^2(A^*\ket{a_1} - B\ket{\neg a_1})(A^*\ket{b_1} - B\ket{\neg b_1})\right)\tag{\textup{H4}}
	\end{align}
	where $N, A, B \in \mathbb{C}$ are complex coefficients, and $\{\ket{a_i}, \ket{\neg a_i}\}$, $\{\ket{b_i}, \ket{\neg b_i}\}$ are orthogonal bases for observables $A_i$ and $B_i$. Let $P$ be the probability induced by $\ket{\Psi}_{\textup{Hardy}}$. Then:
	\begin{align*}
		\textup{From (H1) }:&\ P(a_0\land b_0)=0, i.e., P(\neg(a_0\land b_0))=P(e_1)=1.\\
		\textup{From (H2) }:&\ P(\neg a_1\to b_0)=P(e_2)=1.\\
		\textup{From (H3) }:&\ P(\neg b_1\to a_0)=P(e_3)=1.\\
		\textup{From (H4) }:&\ P(\neg a_1\land\neg b_1)=P(e_4)=\lvert NA^2B^2\rvert^2.
	\end{align*}
	The success probability of the paradox in \eqref{Hardy paradox} is $\mathrm{SP}=P(e_4)=|NA^2B^2|^2$, with a maximum $\mathrm{SP}_{\max}\approx 9\%$ \cite{Hardy1993Nonlocality}.
	
	In fact, any of the four events $e_1, e_2, e_3, e_4$ in \eqref{Hardy paradox} can serve as the “conclusion.” If a quantum state $\rho$ yields $P(e_i) > 0$ for some $i$ while $P(e_k) = 0$ for all $k \ne i$, the contradiction $e_1 \land e_2 \land e_3 \land e_4$ is observed, providing an inequality-free proof of contextuality. This motivates a more general definition.
	
	We now formalize the notion of a logical Hardy-type paradox within the framework of general systems. Let $\mathcal{A}$ be a finite epBA with the classical embedding $i_{\mathcal{A}}:\mathcal{A}\to \mathcal{A}^c$. For an event $e\in\mathcal{A}$, write $e^c:=i_{\mathcal{A}}(e)$. A Hardy-type paradox corresponds to a propositional formula $f$ built from classical events $e^c_1, \dots, e^c_n$ such that $f(e^c_1, \dots, e^c_n) = \bot$. Since any such $f$ is equivalent to a disjunctive normal form, we may assume
	\[
	f = \bigvee_{i=1}^m \bigwedge_{j=1}^k E_{ij} = \bot,
	\]
	where each $E_{ij}$ is either $e^c$ or $\neg e^c$ for some $e \in\mathcal{A}$. The identity $f=\bot$ forces every conjunctive clause $\bigwedge_j E_{ij}$ to be $\bot$. Observing the contradiction $f$ means one of these clauses is realized. Hence, the general form reduces to a single conjunction:
	\[
	\bigwedge_{i=1}^n e^c_i = \bot, \qquad e_i \in\mathcal{A}.
	\]
	
	This leads to the following definition.
	
	\begin{definition}\label{def-LHp}
		Let $\mathcal{A}$ be a finite epBA with the classical embedding and let $p\in s(\mathcal{A})$. A \textbf{logical Hardy-type paradox} on $(\mathcal{A},p)$ is a set of events $\{e_1, \dots, e_n\}\subseteq\mathcal{A}$ satisfying:
		\begin{enumerate}
			\item $e_1^c \land \cdots \land e_n^c = \bot$;
			\item $p(e_k) > 0$ for exactly one $k\in\{1,\dots,n\}$, and $p(e_i) = 1$ for all $i \ne k$.
		\end{enumerate}
		The probability $p(e_k)$ is called the \textbf{success probability (SP)} of the paradox.
	\end{definition}
	
	\section{Logical Contextuality}\label{sec-LC}
	
	Identifying all possible logical Hardy-type paradoxes on a given scenario is challenging. This section presents a result that simplifies the problem. For $e \in \mathcal{A}$, $\lambda \in s_d(\mathcal{A})$, and $p \in s(\mathcal{A})$, the following basic relations are useful:
	\begin{equation}
		\lambda(e) = 1 \;\Leftrightarrow\; \lambda \in e^c; \qquad
		p(e) = 0 \;\Leftrightarrow\; p(\neg e) = 1.
	\end{equation}
	
	Within the sheaf-theoretic framework, Abramsky et al. introduced a hierarchy of contextuality: probabilistic, logical, and strong \cite{Abramsky2011sheaf}. Logical contextuality relaxes probabilistic contextuality by moving from probabilistic to possibilistic constraints. To formalize it in the epBA framework, we first define the possibilistic collapse.
	
	\begin{definition}
		Let $(\mathcal{A},p)$ be a general system. The \textbf{possibilistic collapse} of $p$ is the map $\bar{p}: \mathcal{A} \to \{0,1\}$ given by
		\[
		\bar{p}(x) = 
		\begin{cases}
			0, & p(x) = 0,\\
			1, & p(x) > 0.
		\end{cases}
		\]
	\end{definition}
	
	In sheaf-theoretic language, $\bar{p}$ corresponds to a global section over the Boolean algebra $\{0,1\}$. Logical contextuality is defined by the absence of such a global section compatible with the observed possibilities.
	
	\begin{definition}\label{def-LC}
		A general system $(\mathcal{A}, p)$ is \textbf{logically contextual} if there exists no $p_{\mathcal{A}^c} \in s(\mathcal{A}^c)$ satisfying $\overline{p_{\mathcal{A}^c}}(e^c) = \bar{p}(e)$ for all $e \in \mathcal{A}$.
	\end{definition}
	
	To connect with Hardy-type paradoxes, we need an alternative characterization of logical contextuality, equivalent to Definition~\ref{def-LC} (see \cite{Silva2017Graph,Santos2021Conditions}). We state it as follows.
	
	\begin{theorem}\label{thm-LC2}
		A general system $(\mathcal{A}, p)$ is logically contextual iff there exists an event $e \in \mathcal{A}$ with $p(e) > 0$ such that for every deterministic state $\lambda \in e^c$, one can find an event $e_{\lambda} \in \mathcal{A}$ satisfying $\lambda(e_{\lambda}) = 1$ and $p(e_{\lambda}) = 0$.
	\end{theorem}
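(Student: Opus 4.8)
The plan is to prove both directions by carefully unpacking the definition of logical contextuality (Definition~\ref{def-LC}) in terms of the possibilistic collapse $\bar{p}$, and then translating between the sheaf-theoretic ``no global section'' condition and the pointwise existence statement in the theorem. The bridge between the two is the classical embedding $i_{\mathcal{A}}:\mathcal{A}\to\mathcal{A}^c$ together with the two observations stated just before the theorem, namely that $\lambda(e)=1 \iff \lambda\in e^c$ and that $p(e)=0 \iff p(\neg e)=1$. These let me move freely between statements about deterministic states $\lambda$ and statements about the classical events $e^c$ they inhabit.

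For the contrapositive of the forward direction, I would argue that if no such event $e$ exists, then I can build an explicit global state $p_{\mathcal{A}^c}\in s(\mathcal{A}^c)$ whose possibilistic collapse matches $\bar{p}$, thereby showing $(\mathcal{A},p)$ is not logically contextual. The key idea is to identify the set of deterministic states that are ``possibilistically admissible'': roughly, $\Lambda := \{\lambda\in s_d(\mathcal{A}) : \lambda(e_i)=1 \text{ whenever } \bar{p}(e_i)=1\}$, or dually the states $\lambda$ that are not forced to vanish. Negating the theorem's condition says precisely that for every $e$ with $p(e)>0$, some $\lambda\in e^c$ survives (i.e., has $p(e_{\lambda})>0$ for all $e_{\lambda}$ with $\lambda(e_{\lambda})=1$); this should guarantee that the support I choose on $\mathcal{A}^c = \mathcal{P}(s_d(\mathcal{A}))$ is rich enough to reproduce $\bar{p}$ on every event. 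Concretely I would define $p_{\mathcal{A}^c}$ as a distribution supported on $\Lambda$ (for instance uniform, or any strictly positive weighting), and verify $\overline{p_{\mathcal{A}^c}}(e^c)=\bar{p}(e)$ for all $e$ by checking the two cases $p(e)=0$ and $p(e)>0$ separately.

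For the reverse direction, I would assume such an event $e$ exists and show directly that no global state $p_{\mathcal{A}^c}$ can match $\bar{p}$. Suppose for contradiction that $p_{\mathcal{A}^c}\in s(\mathcal{A}^c)$ satisfies $\overline{p_{\mathcal{A}^c}}(e'^c)=\bar{p}(e')$ for all $e'\in\mathcal{A}$. Since $p(e)>0$ we have $\bar{p}(e)=1$, so $\overline{p_{\mathcal{A}^c}}(e^c)=1$, meaning $p_{\mathcal{A}^c}$ assigns positive weight to $e^c$; hence there is some $\lambda\in e^c$ in the support of $p_{\mathcal{A}^c}$. By hypothesis this $\lambda$ admits $e_{\lambda}$ with $\lambda(e_{\lambda})=1$ and $p(e_{\lambda})=0$. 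Using $\lambda(e_{\lambda})=1 \iff \lambda\in e_{\lambda}^c$, the state $\lambda$ lies in $e_{\lambda}^c$, so $p_{\mathcal{A}^c}(e_{\lambda}^c)>0$, forcing $\overline{p_{\mathcal{A}^c}}(e_{\lambda}^c)=1$. But $p(e_{\lambda})=0$ gives $\bar{p}(e_{\lambda})=0$, contradicting the matching condition. This yields logical contextuality.

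I expect the main obstacle to lie in the forward (contrapositive) direction, specifically in verifying that the candidate global state $p_{\mathcal{A}^c}$ supported on the surviving deterministic states genuinely reproduces $\bar{p}$ on \emph{every} event $e$ simultaneously, rather than event by event. The subtlety is a quantifier exchange: the negated theorem condition provides, for each $e$ with $p(e)>0$, \emph{some} surviving $\lambda\in e^c$, but to build a single consistent global section I need the supporting set $\Lambda$ to be chosen once and to work uniformly across all events. I would handle this by defining $\Lambda$ as the set of deterministic states $\lambda$ for which $p(e_{\lambda})>0$ holds for \emph{all} $e_{\lambda}$ with $\lambda(e_{\lambda})=1$ (i.e., $\lambda$ is never ``killed''), proving that the negated condition guarantees $\Lambda\cap e^c\neq\emptyset$ whenever $p(e)>0$, and then checking that a strictly positive distribution on $\Lambda$ has collapse exactly $\bar{p}$; the direction $\bar{p}(e)=0\Rightarrow\overline{p_{\mathcal{A}^c}}(e^c)=0$ will require showing that no $\lambda\in\Lambda$ lies in $e^c$ when $p(e)=0$, which follows because any such $\lambda$ would have $\lambda(e)=1$ with $p(e)=0$, contradicting $\lambda\in\Lambda$.
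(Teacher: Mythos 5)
Your proposal is correct and follows essentially the same route as the paper: the forward direction via the contrapositive, building a global state supported on the set $\Lambda$ of deterministic states that are never ``killed'' (for all $f$, $\lambda(f)=0$ or $p(f)>0$) and checking the two cases $p(e)=0$ and $p(e)>0$; and the reverse direction by showing any putative matching $p_{\mathcal{A}^c}$ must assign zero weight to every $\lambda\in e^c$, contradicting $\bar p(e)=1$. The quantifier-exchange subtlety you flag is resolved exactly as the paper does it, by fixing $\Lambda$ once and observing that the negated condition guarantees $\Lambda\cap e^c\neq\emptyset$ for every $e$ with $p(e)>0$.
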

	\begin{proof}
		($\Rightarrow$, by contradiction) Assume $(\mathcal{A}, p)$ is logically contextual but the condition fails. Then for every $e \in \mathcal{A}$ with $p(e) > 0$, there exists some $\lambda \in e^c$ such that for all $f \in \mathcal{A}$, $\lambda(f)=0$ or $p(f) > 0$. Define
		\[
		\Lambda = \{\lambda \in s_d(\mathcal{A}) : \forall f \in \mathcal{A},\ \lambda(f)=0 \text{ or } p(f) > 0\}.
		\]
		By assumption, $\Lambda \neq \emptyset$. Let $p'$ be the possibilistic collapse of the state $p_{\mathcal{A}^c} \in s(\mathcal{A}^c)$ given by $p_{\mathcal{A}^c}(\lambda) = 1/|\Lambda|$ for $\lambda \in \Lambda$ and $0$ otherwise. For any $f \in \mathcal{A}$:
		\begin{itemize}
			\item If $p(f) > 0$, then by construction $\exists \lambda \in f^c \cap \Lambda$, so $p'(f^c)=1$.
			\item If $p(f) = 0$, then for any $\lambda \in f^c$ we have $\lambda(f)=1$ but $p(f)=0$, so $\lambda \notin \Lambda$; hence $f^c \cap \Lambda = \emptyset$ and $p'(f^c)=0$.
		\end{itemize}
		Thus $p'(f^c) = \bar{p}(f)$ for all $f$, contradicting logical contextuality.
		
		($\Leftarrow$, by contradiction) Assume the condition holds. Then there exists a set of events $\{e\}\cup\{e_{\lambda} : \lambda\in e^c\}\subseteq\mathcal{A}$ such that:
			\begin{align*}
				p(e) &> 0, \\
				p(e_{\lambda}) &= 0 \quad\text{for all } \lambda\in e^c, \\
				\lambda(e_{\lambda}) &= 1 \quad\text{for all } \lambda\in e^c.
			\end{align*}

	    If $(\mathcal{A}, p)$ is not logically contextual, then there exists $p_{\mathcal{A}^c} \in s(\mathcal{A}^c)$ with $\overline{p_{\mathcal{A}^c}}(f^c) = \bar{p}(f)$ for all $f \in \mathcal{A}$. Take $e \in \mathcal{A}$ as in the condition. For any $\lambda \in e^c$, we have $\lambda \in e_{\lambda}^c$ and $\bar{p}(e_{\lambda}) = 0$, so $\overline{p_{\mathcal{A}^c}}(e_{\lambda}^c)=0$, implying $p_{\mathcal{A}^c}(\lambda)=0$. Hence $p_{\mathcal{A}^c}(e^c)=0$, which gives $\overline{p_{\mathcal{A}^c}}(e^c)=0$. But $\bar{p}(e)=1$ because $p(e) > 0$, a contradiction.
	\end{proof}
	
	Within the sheaf-theoretic approach, Santos et al. showed that for simple $n$-cycle scenarios, possibilistic paradoxes are equivalent to logical contextuality \cite{Santos2021Conditions}. For logical Hardy-type paradoxes, this equivalence extends to arbitrary general scenarios, as stated below.
	
	\begin{theorem}\label{thm-LCiffLHP}
		A general system $(\mathcal{A}, p)$ is logically contextual iff it witnesses a logical Hardy-type paradox.
	\end{theorem}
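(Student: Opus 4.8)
The plan is to prove Theorem~\ref{thm-LCiffLHP} by leveraging the alternative characterization of logical contextuality established in Theorem~\ref{thm-LC2}, which reformulates logical contextuality in terms of the existence of a distinguished event $e$ with $p(e)>0$ together with a family of ``blocking'' events $\{e_\lambda\}$ that kill each deterministic state $\lambda\in e^c$. The two conditions in Definition~\ref{def-LHp} are exactly of this shape, so the theorem should reduce to a translation between the two formulations rather than a fresh argument from scratch.

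For the forward direction ($\Rightarrow$), I would start from the characterization in Theorem~\ref{thm-LC2}: logical contextuality gives an event $e$ with $p(e)>0$ such that for every $\lambda\in e^c$ there is some $e_\lambda\in\mathcal{A}$ with $\lambda(e_\lambda)=1$ and $p(e_\lambda)=0$. The idea is to build the Hardy-type witness set from $e$ together with the negations $\neg e_\lambda$. Since $p(e_\lambda)=0$ is equivalent to $p(\neg e_\lambda)=1$ (one of the ``simple propositions'' recalled before Theorem~\ref{thm-LC2}), each $\neg e_\lambda$ has probability $1$, while $e$ carries the strictly positive success probability, matching condition~2 of Definition~\ref{def-LHp} with $k$ the index of $e$. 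The content lies in verifying condition~1, namely that the conjunction of the classical images vanishes:
\[
e^c \ \land\ \bigwedge_{\lambda\in e^c} (\neg e_\lambda)^c \ =\ \bot.
\]
To see this I would argue at the level of deterministic states: a deterministic state $\mu$ lies in $e^c$ iff $\mu(e)=1$, and if $\mu\in e^c$ then $\mu$ is one of the $\lambda$'s, for which the corresponding blocker satisfies $\lambda(e_\lambda)=1$, hence $\mu\notin(\neg e_\lambda)^c$. Thus no deterministic state survives all the conjuncts, so the conjunction is $\bot$. Here the finiteness of $\mathcal{A}$ (hence of $e^c\subseteq s_d(\mathcal{A})$) guarantees the conjunction is finite and well-defined.

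For the converse ($\Leftarrow$), I would take a witnessing set $\{e_1,\dots,e_n\}$ with $\bigwedge_i e_i^c=\bot$, $p(e_k)>0$, and $p(e_i)=1$ for $i\neq k$. I then aim to recover the data required by Theorem~\ref{thm-LC2} by setting $e:=e_k$. Given any $\lambda\in e_k^c$, the identity $\bigwedge_i e_i^c=\bot$ forces $\lambda\notin\bigcap_{i\neq k} e_i^c$, so there is some index $i\neq k$ with $\lambda\notin e_i^c$, i.e. $\lambda(e_i)=0$, equivalently $\lambda(\neg e_i)=1$. Setting $e_\lambda:=\neg e_i$ gives $\lambda(e_\lambda)=1$, and since $p(e_i)=1$ we get $p(\neg e_i)=0$, so $p(e_\lambda)=0$, exactly the blocking condition. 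Combined with $p(e_k)>0$, Theorem~\ref{thm-LC2} then yields logical contextuality.

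The main obstacle I anticipate is bookkeeping at the boundary between the event algebra $\mathcal{A}$ and its classical image $\mathcal{A}^c$: one must be careful that the classical embedding $i_{\mathcal{A}}$ commutes with $\neg$ and $\land$ (so that $(\neg e_\lambda)^c=\neg(e_\lambda^c)$ and conjunctions translate to intersections of deterministic-state sets), and that ``$\lambda(e)=1\iff\lambda\in e^c$'' is applied consistently. These are furnished by the embedding theorem referenced in the preliminaries, so the difficulty is one of careful translation rather than a deep new idea; the essential mathematical work is the deterministic-state computation showing the conjunction collapses to $\bot$.
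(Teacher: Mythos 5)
Your proposal is correct and follows essentially the same route as the paper's own proof: both directions reduce to the characterization in Theorem~\ref{thm-LC2}, with the forward direction building the witness set $\{e\}\cup\{\neg e_\lambda\}$ and checking emptiness of the classical conjunction via deterministic states, and the converse recovering the blocking events $e_\lambda=\neg e_i$ by De~Morgan. The only difference is notational (the paper writes the paradox in the converse as $\{e,\neg e_1,\dots,\neg e_n\}$ rather than negating on the fly), so there is nothing substantive to add.
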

	\begin{proof}
		($\Rightarrow$) Assume $(\mathcal{A}, p)$ is logically contextual. By Theorem~\ref{thm-LC2}, there exists an event $e \in \mathcal{A}$ with $p(e) > 0$ such that for every $\lambda \in e^c$, one can choose an event $e_{\lambda} \in \mathcal{A}$ satisfying $\lambda \in e_{\lambda}^c$ and $p(\neg e_{\lambda}) = 1$. We claim that $\{e\} \cup \{\neg e_{\lambda} : \lambda \in e^c\}$ forms a logical Hardy-type paradox. Indeed,
		\[
		e^c \land \bigwedge_{\lambda \in e^c} \neg e_{\lambda}^c = \bot,
		\]
		because for each $\lambda \in e^c$ we have $\lambda \notin \neg e_{\lambda}^c$ (since $\lambda \in e_{\lambda}^c$, so the intersection is empty.
		
		($\Leftarrow$) Suppose $(\mathcal{A}, p)$ admits a logical Hardy-type paradox $\{e, \neg e_1, \dots, \neg e_n\}$ with $p(e) > 0$, $p(\neg e_i)=1$ for all $i$, and $e^c \land \bigwedge_{i=1}^n \neg e_i^c = \bot$. By De Morgan, $\bigwedge_i \neg e_i^c = \neg \bigvee_i e_i^c$, so $e^c \subseteq \bigvee_i e_i^c$. Hence, for every $\lambda \in e^c$, there exists $i$ such that $\lambda \in e_i^c$, i.e., $\lambda(e_i)=1$. But $p(e_i)=0$ because $p(\neg e_i)=1$. This fulfills the condition of Theorem~\ref{thm-LC2}, so $(\mathcal{A}, p)$ is logically contextual.
	\end{proof}
	
	Theorem~\ref{thm-LCiffLHP} generalizes earlier results on $(2,k,2)$, $(2,2,d)$ Bell scenarios \cite{Mansfield2012Hardy} and $n$-cycle scenarios \cite{Santos2021Conditions} to any general scenario, including all quantum scenarios.
	
	\subsection{A Logical Hardy-Type Paradox on the Mansfield Scenario}\label{subsec-MF}
	
	Mansfield et al. exhibited a logically contextual state $p_{M}$ in the $(2,3,3)$ scenario that does not yield a coarse‑grained Hardy‑type paradox \cite{Mansfield2012Hardy, Mansfield2013The}. Nevertheless, Theorem~\ref{thm-LCiffLHP} implies that a logical Hardy‑type paradox must exist for $p_{M}$.
	
	The Mansfield scenario is a three‑dimensional bipartite quantum scenario $\mathcal{Q}_M$ that extends naturally to the full $(2,3,3)$ Bell scenario. Its possibilistic collapse $\bar{p}_M$ is given in Table~\ref{table-MF}. In this setup, Alice measures three dichotomic observables, while Bob measures one dichotomic and one trichotomic observable.
	\begin{table}[h]
		\centering
		\scalebox{1.2}{\begin{tabular}{c|cc|ccc|}
				& $b_1$ & $\neg b_1$ & $b_{21}$ & $b_{22}$ & $b_{23}$ \\
				\hline
				$a_1$      & $\mathbf{1}$ & 1 & $\mathbf{0}$ & 1          & 1 \\
				$\neg a_1$ & 1          & 1 & 1          & 1          & 1 \\
				\hline
				$a_2$      & $\mathbf{0}$ & 1 & 1          & 1          & 1 \\
				$\neg a_2$ & 1          & 1 & 1          & $\mathbf{0}$ & 1 \\
				\hline
				$a_3$      & $\mathbf{0}$ & 1 & 1          & 1          & 1 \\
				$\neg a_3$ & 1          & 1 & 1          & 1          & $\mathbf{0}$ \\
				\hline
		\end{tabular}}
		\caption{Possibilistic collapse of $p_{M}$, denoted by $\bar{p}_{M}$. Entries marked $1$ indicate positive probability ($P > 0$), while $0$ denotes impossible events ($P = 0$).}
		\label{table-MF}
	\end{table}
	
	To prove $p_M$ is logically contextual, consider the six bold entries in Table~\ref{table-MF}. For any deterministic state $\lambda$ with $\lambda(a_1 \land b_1)=1$, we have $\lambda(\neg a_1)=\lambda(\neg b_1)=0$; hence all entries in the second row and second column of the table are $0$. To avoid contradicting $\bar{p}_M(a_1 \land b_{21})=0$, we must have $\lambda(b_{22})=1$ or $\lambda(b_{23})=1$. If $\lambda(b_{22})=1$, then $\lambda(a_2)=1$ (because $\bar p_M(\neg a_2\land b_{22})=0$), which forces $\lambda(a_2 \land b_1)=1$, contradicting $\bar{p}_M(a_2 \land b_1)=0$. Similarly, if $\lambda(b_{23})=1$, then $\lambda(a_3)=1$, leading to $\lambda(a_3 \land b_1)=1$ and contradicting $\bar{p}_M(a_3 \land b_1)=0$. Thus, for every such $\lambda$, there exists an event $e$ with $\lambda(e)=1$ but $\bar p_M(e)=0$; therefore $(\mathcal{Q}_M,p_M)$ is logically contextual.
	
	Following the proof of Theorem~\ref{thm-LCiffLHP}, the six events that witness logical contextuality form a logical Hardy‑type paradox:
	\begin{equation}
		\label{Mansfield paradox}
		\begin{aligned}
			&e_1=a_1\land b_1,\quad &e_2&=\neg(a_1 \land b_{21}),\quad &e_3&=\neg(a_2\land b_1)\\
			&e_4=\neg(\neg a_2 \land b_{22}),\quad &e_5&=\neg(a_3 \land b_1),\quad &e_6&=\neg(\neg a_3\land b_{23}).
		\end{aligned}
	\end{equation}
	These satisfy $p_M(e_1) > 0$, $p_M(e_k)=1$ for $k=2,\dots,6$, and classically we have $\bigwedge_{i=1}^6 e_i^c = \bot$. Hence $(\mathcal{Q}_M,p_M)$ witnesses a logical Hardy‑type paradox.
	
	This paradox can be stated as a FTI argument: the conditions $p(e_k)=1$ ($k=2,\dots,6$) force $p(e_1)=0$ for any classical probability $p$. Indeed, if $e_1$ occurred, then $\neg a_2$ and $\neg a_3$ would occur (since $p(a_2\land b_1)=p(a_3\land b_1)=0$), and either $b_{22}$ or $b_{23}$ would occur (because $p(a_1\land b_{21})=0$). But $b_{22}$ contradicts $p(\neg a_2\land b_{22})=0$, and $b_{23}$ contradicts $p(\neg a_3\land b_{23})=0$. Hence $e_1$ cannot occur, so $p(e_1)=0$.
	
	In fact, any inequality‑free contextuality proof based on a logical contradiction can be described as a logical Hardy‑type paradox, including FTI‑type Hardy paradoxes.
	
	\subsection{Strong Contextuality}\label{subsec-SC}
	
	The strongest form in the contextuality hierarchy introduced by sheaf-theoretic approach is strong contextuality \cite{Abramsky2011sheaf}, a special case of logical contextuality defined as follows.
	
	\begin{definition}
		A general system $(\mathcal{A}, p)$ is \textbf{strongly contextual} if for every deterministic state $\lambda \in s_d(\mathcal{A})$, there exists an event $e_{\lambda} \in \mathcal{A}$ such that $\lambda(e_{\lambda}) = 1$ and $p(e_{\lambda}) = 0$.
	\end{definition}
	
	By Theorem~\ref{thm-LC2}, strong contextuality implies logical contextuality. Its connection to logical Hardy-type paradoxes is given by the following theorem, whose proof parallels that of Theorem~\ref{thm-LCiffLHP}.
	
	\begin{theorem}\label{thm-SCiffLHP}
		A general system $(\mathcal{A}, p)$ is strongly contextual iff it witnesses a logical Hardy-type paradox with success probability $\mathrm{SP}=1$.
	\end{theorem}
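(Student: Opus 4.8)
The plan is to mirror the structure of the proof of Theorem~\ref{thm-LCiffLHP}, but to replace the role of the distinguished event $e$ (supplied there by logical contextuality) with the universal quantification over \emph{all} deterministic states that characterizes strong contextuality. The organizing observation is that the constraint $\mathrm{SP}=1$ collapses the premise/conclusion asymmetry of Definition~\ref{def-LHp}: the conclusion event $e_k$ satisfies $p(e_k)=\mathrm{SP}=1$, while the remaining events already satisfy $p(e_i)=1$, so a logical Hardy-type paradox with $\mathrm{SP}=1$ is precisely a family $\{e_1,\dots,e_n\}$ with $e_1^c\land\dots\land e_n^c=\bot$ and $p(e_i)=1$ for every $i$. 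I will use freely the two elementary propositions stated at the start of Section~\ref{sec-LC}, namely $\lambda(e)=1\iff\lambda\in e^c$ and $p(e)=0\iff p(\neg e)=1$.

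For the forward direction, I would begin from the definition of strong contextuality: for each $\lambda\in s_d(\mathcal{A})$ there is an event $e_\lambda$ with $\lambda(e_\lambda)=1$ and $p(e_\lambda)=0$, equivalently $\lambda\in e_\lambda^c$ and $p(\neg e_\lambda)=1$. Since $\mathcal{A}$ is a finite epBA, $s_d(\mathcal{A})$ is finite, so $\{\neg e_\lambda:\lambda\in s_d(\mathcal{A})\}$ is a finite set of events, each of probability $1$. I then verify condition~$1$ of Definition~\ref{def-LHp}, that $\bigwedge_{\lambda}\neg e_\lambda^c=\bot$: for any $\mu\in s_d(\mathcal{A})$, taking $\lambda=\mu$ gives $\mu\in e_\mu^c$, hence $\mu\notin\neg e_\mu^c$, so $\mu$ lies outside the intersection; as this holds for every $\mu$, the intersection is empty. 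Designating any one of these events as the conclusion yields a logical Hardy-type paradox with $p(e_k)=\mathrm{SP}=1$.

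For the converse, suppose $(\mathcal{A},p)$ witnesses a logical Hardy-type paradox $\{e_1,\dots,e_n\}$ with $\mathrm{SP}=1$, so $p(e_i)=1$ for all $i$ and $e_1^c\land\dots\land e_n^c=\bot$. Fix an arbitrary $\mu\in s_d(\mathcal{A})$. Because $\bigcap_i e_i^c=\emptyset$, there is an index $i$ with $\mu\notin e_i^c$, i.e.\ $\mu(\neg e_i)=1$. Setting $e_\mu:=\neg e_i$ then gives $\mu(e_\mu)=1$, while $p(e_i)=1$ forces $p(e_\mu)=p(\neg e_i)=0$ by the proposition above. Since $\mu$ was arbitrary, this is exactly the defining condition of strong contextuality.

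I do not expect a genuine technical obstacle, as the argument is a direct adaptation of Theorem~\ref{thm-LCiffLHP}. The steps requiring the most care are the bookkeeping around $\mathrm{SP}=1$, namely recognizing that it forces all $n$ events to carry probability $1$ and thereby removes the distinguished role of the conclusion, together with the explicit appeal to the finiteness of $s_d(\mathcal{A})$ to guarantee that the constructed paradox involves only finitely many events, as required by Definition~\ref{def-LHp}.
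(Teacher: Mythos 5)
Your proposal is correct and follows essentially the same route as the paper's own proof: the forward direction builds the paradox $\{\neg e_\lambda : \lambda \in s_d(\mathcal{A})\}$ from the events witnessing strong contextuality and checks the empty intersection pointwise, and the converse uses De Morgan (implicitly, via $\bigcap_i e_i^c = \emptyset$) to assign to each deterministic state a probability-zero event it makes true. The only cosmetic difference is that you write the paradox as $\{e_1,\dots,e_n\}$ and negate afterwards, whereas the paper writes it as $\{\neg e_1,\dots,\neg e_n\}$ from the outset.
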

	\begin{proof}
		($\Rightarrow$) Let $(\mathcal{A}, p)$ be strongly contextual. For each $\lambda \in s_d(\mathcal{A})$, choose $e_{\lambda} \in \mathcal{A}$ with $\lambda \in e_{\lambda}^c$ and $p(\neg e_{\lambda}) = 1$. Then $\{\neg e_{\lambda} : \lambda \in s_d(\mathcal{A})\}$ is a logical Hardy‑type paradox because \[\bigwedge\limits_{\lambda\in s_d(\mathcal{A})} \neg e_{\lambda}^c = \bot;\]
		for any $\lambda$, we have $\lambda \notin \neg e_{\lambda}^c$ (since $\lambda \in e_{\lambda}^c$), so the intersection is empty.
		
		($\Leftarrow$) Suppose $(\mathcal{A}, p)$ admits a logical Hardy‑type paradox $\{\neg e_1, \dots, \neg e_n\}$ with $p(\neg e_i)=1$ for all $i$ and $\bigwedge_i \neg e_i^c = \bot$. By De Morgan, $\bigvee_i e_i^c = s_d(\mathcal{A})$. Hence for every $\lambda \in s_d(\mathcal{A})$, there exists $i$ such that $\lambda \in e_i^c$, i.e., $\lambda(e_i)=1$. Since $p(\neg e_i)=1$, we have $p(e_i)=0$, satisfying the definition of strong contextuality.
	\end{proof}
	
	Well‑known examples of strong contextuality include the Greenberger-Horne-Zeilinger (GHZ) state in the $(3,2,2)$ scenario \cite{Greenberger1989Going} and the Popescu-Rohrlich (PR) box in the $(2,2,2)$ scenario \cite{Popescu1994Quantum}. Moreover, any state on a Kochen‑Specker scenario (where $s_d(\mathcal{A}) = \emptyset$) is trivially strongly contextual \cite{Kochen1967The,Cabello1997bell}.
	
	\section{Incidence Matrix and Atom Graph}\label{sec-IMandAT}
	
	Theorem~\ref{thm-LCiffLHP} provides a systematic way to find Hardy‑type paradoxes via logical contextuality. To implement this, we use incidence matrices developed in the sheaf‑theoretic approach \cite{Abramsky2011sheaf}, which encode the relation between deterministic states (global sections) and atoms (local sections) of a scenario.
	
	For a finite general system $(\mathcal{A}, p)$, Theorem~\ref{thm:atom_graph_correspondence} implies that $(\mathcal{A}, p)$ is fully determined by its atom graph $\mathcal{G}_a(\mathcal{A})$ and the induced state. Deterministic states on $\mathcal{A}$ correspond bijectively to deterministic states on $\mathcal{G}_a(\mathcal{A})$, i.e., functions $\lambda: \mathrm{At}(\mathcal{A}) \to \{0,1\}$ that assign $1$ to exactly one vertex in each maximal clique.
	
	Let $\{\lambda_1,\dots,\lambda_m\}$ be the deterministic states and $\{v_1,\dots,v_n\}$ the vertices (atoms) of $\mathcal{G}_a(\mathcal{A})$. The incidence matrix of $\mathcal{A}$ is the $n\times m$ matrix
	\begin{equation}
	\mathrm{M}(\mathcal{A})[i,j] = 
	\begin{cases}
		0, & \lambda_j(v_i)=0,\\
		1, & \lambda_j(v_i)=1,
	\end{cases}
	\end{equation}
	so the $j$th column is the characteristic vector of $\lambda_j$.
	
	As an illustration, consider the $(2,2,2)$ Bell scenario $\mathcal{Q}_{(2,2,2)}$ generated by events $\{a_0,a_1,b_0,b_1\}$ (see Section~\ref{sec-LHp}). Its 16 atoms are
	\[
	\mathrm{At}(\mathcal{Q}_{(2,2,2)}) = \{a_i\land b_j,\; a_i\land\neg b_j,\; \neg a_i\land b_j,\; \neg a_i\land\neg b_j\}_{i,j=0,1},
	\]
	and its atom graph $\mathcal{G}_a(\mathcal{Q}_{(2,2,2)})$ is shown in Fig.~\ref{fig-atom_graph_222}.
	
	\begin{figure}[H]
		\centering
		\includegraphics[width=0.5\linewidth]{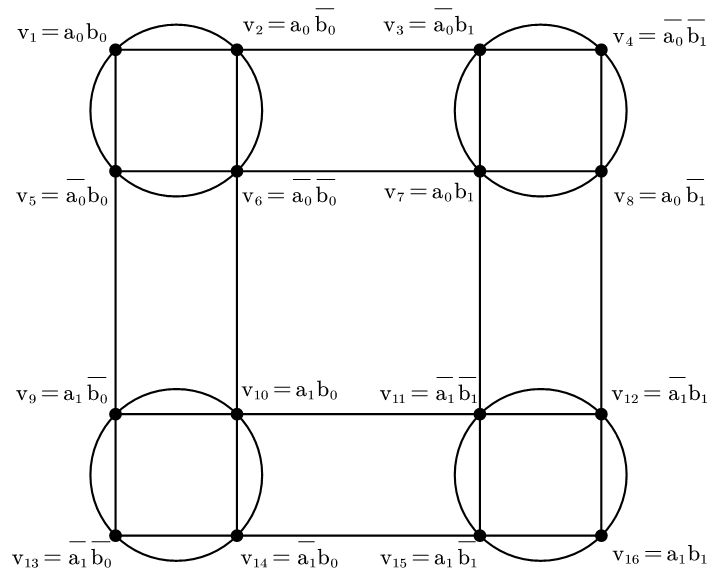}
		\caption{Atom graph $\mathcal{G}_a(\mathcal{Q}_{(2,2,2)})$. Here $\overline{a_i}=\neg a_i$, $\overline{b_j}=\neg b_j$, $a_ib_j=a_i\land b_j$ ($i,j\in\{0,1\}$). Edges denote compatibility, and each straight line or circle is a maximal clique.}
		\label{fig-atom_graph_222}
	\end{figure}
	
	The scenario has 16 deterministic states $s_d(\mathcal{Q}_{(2,2,2)})=\{\lambda_1,\dots,\lambda_{16}\}$; its incidence matrix is:
	\begin{equation}
		\label{eq-M-CHSH}
		\begin{aligned}
			\mathrm{M}(\mathcal{Q}_{(2,2,2)})=&\left(
			\scalebox{0.85}{\begin{tabular}{cccccccccccccccc}
					1 & 1 & 1 & 1 & 0 & 0 & 0 & 0 & 0 & 0 & 0 & 0 & 0 & 0 & 0 & 0 \\
					0 & 0 & 0 & 0 & 1 & 1 & 1 & 1 & 0 & 0 & 0 & 0 & 0 & 0 & 0 & 0 \\
					0 & 0 & 0 & 0 & 0 & 0 & 0 & 0 & 1 & 1 & 1 & 1 & 0 & 0 & 0 & 0 \\
					0 & 0 & 0 & 0 & 0 & 0 & 0 & 0 & 0 & 0 & 0 & 0 & 1 & 1 & 1 & 1 \\
					0 & 0 & 0 & 0 & 0 & 0 & 0 & 0 & 1 & 1 & 0 & 0 & 1 & 1 & 0 & 0 \\
					0 & 0 & 0 & 0 & 0 & 0 & 0 & 0 & 0 & 0 & 1 & 1 & 0 & 0 & 1 & 1 \\
					1 & 1 & 0 & 0 & 1 & 1 & 0 & 0 & 0 & 0 & 0 & 0 & 0 & 0 & 0 & 0 \\
					0 & 0 & 1 & 1 & 0 & 0 & 1 & 1 & 0 & 0 & 0 & 0 & 0 & 0 & 0 & 0 \\
					0 & 0 & 0 & 0 & 1 & 0 & 1 & 0 & 0 & 0 & 1 & 0 & 0 & 0 & 1 & 0 \\
					1 & 0 & 1 & 0 & 0 & 0 & 0 & 0 & 1 & 0 & 0 & 0 & 1 & 0 & 0 & 0 \\
					0 & 0 & 0 & 1 & 0 & 0 & 0 & 1 & 0 & 0 & 0 & 0 & 0 & 1 & 0 & 1 \\
					0 & 1 & 0 & 0 & 0 & 1 & 0 & 0 & 0 & 1 & 0 & 1 & 0 & 0 & 0 & 0 \\
					0 & 0 & 0 & 0 & 0 & 1 & 0 & 1 & 0 & 0 & 0 & 1 & 0 & 0 & 0 & 1 \\
					0 & 1 & 0 & 1 & 0 & 0 & 0 & 0 & 0 & 1 & 0 & 0 & 0 & 1 & 0 & 0 \\
					0 & 0 & 1 & 0 & 0 & 0 & 1 & 0 & 0 & 0 & 0 & 0 & 1 & 0 & 1 & 0 \\
					1 & 0 & 0 & 0 & 1 & 0 & 0 & 0 & 1 & 0 & 1 & 0 & 0 & 0 & 0 & 0
			\end{tabular}}
			\right)_{16\times16}
		\end{aligned}
	\end{equation}
	
	As a non‑Bell example, consider the KCBS scenario $\mathcal{Q}_{\mathrm{KCBS}}$ \cite{Alexander2008Simple}, generated by five rank‑1 projectors $\{\hat P_i\}_{i=0}^4$ on a three‑dimensional Hilbert space $\mathcal{H}$ with $\hat P_i \perp \hat P_{i+1}$ (indices mod 5). Its atom graph $\mathcal{G}_a(\mathcal{Q}_{\mathrm{KCBS}})$ is shown in Fig.~\ref{fig-atom_graph_kcbs}.
	
	\begin{figure}[H]
		\centering
		\includegraphics[width=0.5\linewidth]{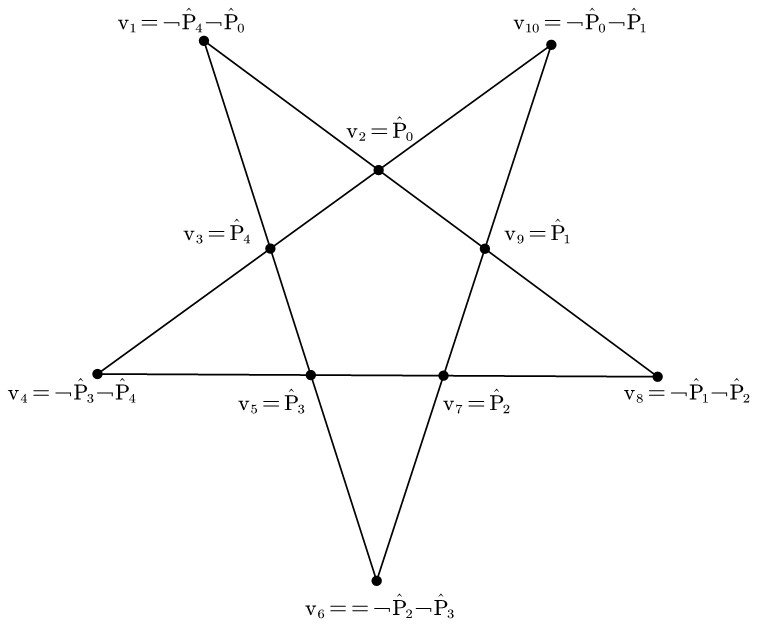}
		\caption{Atom graph $\mathcal{G}_a(\mathcal{Q}_{\mathrm{KCBS}})$. Notation: $\neg\hat P\neg\hat Q = (\neg\hat P)\land(\neg\hat Q)$. Edges indicate compatibility.}
		\label{fig-atom_graph_kcbs}
	\end{figure}
	
	The graph has 10 vertices. By enumeration, $\mathcal{Q}_{\mathrm{KCBS}}$ admits exactly 11 deterministic states, whose incidence matrix is:
	\begin{equation}
		\label{eq-M-KCBS}
		\begin{aligned}
			\mathrm{M}(\mathcal{Q}_{\mathrm{KCBS}})=&\left(
			\begin{tabular}{ccccccccccc}
				1 & 0 & 0 & 1 & 1 & 1 & 0 & 0 & 0 & 0 & 1 \\
				0 & 1 & 0 & 0 & 0 & 0 & 1 & 1 & 0 & 0 & 0 \\
				0 & 0 & 1 & 0 & 0 & 0 & 0 & 0 & 1 & 1 & 0 \\
				1 & 1 & 0 & 0 & 1 & 1 & 0 & 1 & 0 & 0 & 0 \\
				0 & 0 & 0 & 1 & 0 & 0 & 1 & 0 & 0 & 0 & 1 \\
				1 & 1 & 1 & 0 & 0 & 1 & 0 & 0 & 0 & 1 & 0 \\
				0 & 0 & 0 & 0 & 1 & 0 & 0 & 1 & 1 & 0 & 0 \\
				1 & 1 & 1 & 1 & 0 & 0 & 1 & 0 & 0 & 0 & 0 \\
				0 & 0 & 0 & 0 & 0 & 1 & 0 & 0 & 0 & 1 & 1 \\
				1 & 0 & 1 & 1 & 1 & 0 & 0 & 0 & 1 & 0 & 0
			\end{tabular}
			\right)_{10\times11}
		\end{aligned}
	\end{equation}
	
	The incidence matrix yields a Boolean‑system test for logical contextuality based on Definition~\ref{def-LC} \cite{Abramsky2011sheaf}.
	
	\begin{corollary}\label{cor-LCequation}
		Let $\mathcal{A}$ be a finite epBA with atoms $\{v_1,\dots,v_n\}$ and deterministic states $\{\lambda_1,\dots,\lambda_m\}$. A general system $(\mathcal{A},p)$ is logically contextual iff the Boolean equation
		\[
		\mathrm{M}(\mathcal{A})\mathbf{x} = \bar{\boldsymbol{p}}
		\]
		has no solution $\mathbf{x}=(x_1,\dots,x_m)^T\in\mathbb{B}_2^m$, where $\bar{\boldsymbol{p}}=(\bar p(v_1),\dots,\bar p(v_n))^T$.
	\end{corollary}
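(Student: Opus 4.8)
The plan is to translate the Boolean matrix equation into the language of classical distributions and their possibilistic collapses, and then apply Definition~\ref{def-LC} directly. First I would unpack the matrix product over $\mathbb{B}_2$: since the Boolean semiring uses $\vee$ as addition and $\wedge$ as multiplication, the $i$-th coordinate of $\mathrm{M}(\mathcal{A})\mathbf{x}$ is $\bigvee_{j}\big(\mathrm{M}(\mathcal{A})[i,j]\wedge x_j\big) = \bigvee_{j:\,x_j=1}\lambda_j(v_i)$. I would then read a Boolean vector $\mathbf{x}$ as the indicator of a subset $S_{\mathbf{x}} = \{\lambda_j : x_j = 1\}$ of deterministic states. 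Under this reading, the equation $\mathrm{M}(\mathcal{A})\mathbf{x} = \bar{\boldsymbol{p}}$ asserts coordinatewise that, for each atom $v_i$, some $\lambda_j \in S_{\mathbf{x}}$ assigns $v_i$ the value $1$ exactly when $\bar{p}(v_i) = 1$.

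Next I would build the bridge to $s(\mathcal{A}^c)$. Any $p_{\mathcal{A}^c} \in s(\mathcal{A}^c)$ is a probability distribution on $s_d(\mathcal{A})$, and its support determines a Boolean vector; conversely any nonempty $S_{\mathbf{x}}$ carries a distribution (for instance the uniform weight $1/|S_{\mathbf{x}}|$). Since $p$ is a state, $p(1_{\mathcal{A}}) = 1$ forces at least one atom to have positive probability, so $\bar{\boldsymbol{p}} \neq \mathbf{0}$ and every solution $\mathbf{x}$ is necessarily nonzero, hence a genuine support. The possibilistic collapse of $p_{\mathcal{A}^c}$ on an atom $v_i$ evaluates to $1$ precisely when some state in its support lies in $v_i^c$, i.e. when $(\mathrm{M}(\mathcal{A})\mathbf{x})_i = 1$; thus $\overline{p_{\mathcal{A}^c}}(v_i^c) = (\mathrm{M}(\mathcal{A})\mathbf{x})_i$, so solving the equation coincides with matching $\overline{p_{\mathcal{A}^c}}$ to $\bar{p}$ on every atom.

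The one genuine step is to upgrade agreement on atoms to agreement on all events, as Definition~\ref{def-LC} demands. Here I would invoke the atom-graph correspondence (Theorem~\ref{thm:atom_graph_correspondence}): within the context containing any event $e$, the element $e$ is the join of the atoms below it, and under the classical embedding this yields $e^c = \bigcup_{v \leq e} v^c$ (union over atoms $v \leq e$). Both collapses distribute over this union as disjunctions, giving $\bar{p}(e) = \bigvee_{v \leq e}\bar{p}(v)$ and $\overline{p_{\mathcal{A}^c}}(e^c) = \bigvee_{v \leq e}\overline{p_{\mathcal{A}^c}}(v^c)$, so agreement on atoms propagates to agreement on $e$; the reverse implication is immediate since atoms are themselves events. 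Verifying this distributivity carefully—that the atoms below $e$ are exactly the context-atoms used to compute $p(e)$, and that the embedding sends joins to unions—is the main obstacle, and it is precisely what the atom-graph machinery supplies.

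Finally I would assemble the chain of equivalences: $(\mathcal{A},p)$ is logically contextual iff no $p_{\mathcal{A}^c}\in s(\mathcal{A}^c)$ satisfies $\overline{p_{\mathcal{A}^c}}(e^c) = \bar{p}(e)$ for all $e\in\mathcal{A}$ (Definition~\ref{def-LC}), iff no such distribution matches $\bar{p}$ on all atoms (by the previous paragraph), iff no Boolean vector $\mathbf{x}$ satisfies $\mathrm{M}(\mathcal{A})\mathbf{x} = \bar{\boldsymbol{p}}$ (by the first two paragraphs). Negating throughout gives the stated equivalence.
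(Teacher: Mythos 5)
Your proposal is correct and follows essentially the same route as the paper: translate a Boolean solution vector into the support of a distribution on $s_d(\mathcal{A})$ (e.g.\ uniform weights), match possibilistic collapses on atoms via the matrix product, and invoke Definition~\ref{def-LC}. The only difference is that you explicitly justify the step from agreement on atoms to agreement on all events (via $e$ being the join of the atoms below it in its context) and the nonvanishing of $\bar{\boldsymbol{p}}$, both of which the paper's proof asserts without comment.
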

	\begin{proof}
		($\Rightarrow$) Suppose, for contradiction, that $\mathrm{M}(\mathcal{A})\mathbf{x}=\bar{\boldsymbol{p}}$ has a solution $\mathbf{x}\neq\mathbf0$. Define $p'$ on $s_d(\mathcal{A})$ by $p'(\{\lambda_i\})=x_i$. For any atom $v_i$,
		\[
		p'(v_i^c)=\bigvee_{\substack{j=1\\ \lambda_j(v_i)=1}}^m x_j
		=(\mathrm{M}(\mathcal{A})\mathbf{x})_i
		=\bar p(v_i).
		\]
		Hence $p'(e^c)=\bar p(e)$ for every $e\in\mathcal{A}$, contradicting logical contextuality.
		
		($\Leftarrow$) Assume $(\mathcal{A},p)$ is not logically contextual, so there exists $p_{\mathcal{A}^c}\in s(\mathcal{A}^c)$ with $\overline{p_{\mathcal{A}^c}}(e^c)=\bar p(e)$ for all $e$. Set $x_j=\overline{p_{\mathcal{A}^c}}(\{\lambda_j\})$. Then
		\[
		(\mathrm{M}(\mathcal{A})\mathbf{x})_i
		=\bigvee_{\substack{j=1\\ \lambda_j(v_i)=1}}^m x_j
		=\overline{p_{\mathcal{A}^c}}(v_i^c)
		=\bar p(v_i),
		\]
		so $\mathbf{x}$ solves $\mathrm{M}(\mathcal{A})\mathbf{x}=\bar{\boldsymbol{p}}$, a contradiction.
	\end{proof}
	
	Corollary~\ref{cor-LCequation} gives a Boolean‑equation criterion for logical contextuality of a quantum system $(\mathcal{Q},\rho)$. If $\mathcal{G}_a(\mathcal{Q})$ has $n$ vertices, the set of all possible possibilistic distributions is $\mathbb{B}_2^{\times n}=\{(b_1,\dots,b_n)^T:b_i\in\{0,1\}\}$. An exhaustive search over this set can identify all logically contextual states on $\mathcal{Q}$. For instance, in the $(2,2,2)$ Bell scenario $\mathcal{Q}_{(2,2,2)}$, we find 64 logically contextual Boolean vectors with exactly three zeros. Symmetry of $\mathcal{G}_a(\mathcal{Q}_{(2,2,2)})$ partitions these into 10 equivalence classes, whose representatives are listed below.
	\begin{equation}
		\label{eq-CHSH-10}
		\begin{split}
			\mathbf{b}_1&=(0, 1, 0, 1, 1, 1, 1, 1, 0, 1, 1, 1, 1, 1, 1, 1);\\
			\mathbf{b}_2&=(0, 1, 0, 1, 1, 1, 1, 1, 1, 1, 1, 1, 0, 1, 1, 1);\\
			\mathbf{b}_3&=(0, 1, 1, 0, 1, 1, 1, 1, 1, 1, 1, 1, 0, 1, 1, 1);\\
			\mathbf{b}_4&=(1, 0, 0, 1, 1, 1, 1, 1, 1, 0, 1, 1, 1, 1, 1, 1);\\
			\mathbf{b}_5&=(1, 0, 0, 1, 1, 1, 1, 1, 1, 1, 1, 1, 1, 0, 1, 1);\\
			\mathbf{b}_6&=(1, 0, 1, 0, 1, 1, 1, 1, 1, 0, 1, 1, 1, 1, 1, 1);\\
			\mathbf{b}_7&=(1, 0, 1, 0, 1, 1, 1, 1, 1, 1, 1, 1, 1, 0, 1, 1);\\
			\mathbf{b}_8&=(1, 1, 1, 1, 1, 0, 0, 1, 1, 0, 1, 1, 1, 1, 1, 1);\\
			\mathbf{b}_9&=(1, 1, 1, 1, 1, 0, 1, 0, 1, 0, 1, 1, 1, 1, 1, 1);\\
			\mathbf{b}_{10}&=(1, 1, 1, 1, 1, 0, 1, 0, 1, 1, 1, 1, 1, 0, 1, 1).
		\end{split}
	\end{equation}
	
	The original Hardy paradox \eqref{Hardy paradox} corresponds exactly to the Boolean vector $\mathbf{b}_3$ \cite{Hardy1993Nonlocality}. It is equivalent to the logical Hardy‑type paradox $\{v_{11}, \neg v_1, \neg v_4, \neg v_{13}\}$, where $v_{11}$, $v_1$, $v_4$, $v_{13}$ denote the events $\ket{\neg a_1}\!\ket{\neg b_1}$, $\ket{a_0}\!\ket{b_0}$, $\ket{\neg a_0}\!\ket{\neg b_1}$, $\ket{\neg a_1}\!\ket{\neg b_0}$, respectively. The unique quantum state $\rho$ that witnesses this paradox is the pure state orthogonal to $\ket{a_0}\!\ket{b_0}$, $\ket{\neg a_0}\!\ket{\neg b_1}$ and $\ket{\neg a_1}\!\ket{\neg b_0}$, which is precisely the Hardy state. Its success probability is $\rho(v_{11})=|\bra{\Psi}_{\text{Hardy}}\!\ket{\neg a_1}\!\ket{\neg b_1}|^2$.
	
	\subsection{Logical Hardy‑Type Paradox in the KCBS Scenario}\label{subsec-LHp KCBS}
	
	The KCBS scenario is the simplest $5$-cycle scenario. We find 21 logically contextual Boolean vectors on $\mathcal{Q}_{\mathrm{KCBS}}$, which fall into 5 equivalence classes by the rotational symmetry of $\mathcal{G}_a(\mathcal{Q}_{\mathrm{KCBS}})$. Representative vectors are shown in Fig.~\ref{KCBS-5}.
	\begin{figure}[H]
		\centering
		\includegraphics[width=1\linewidth]{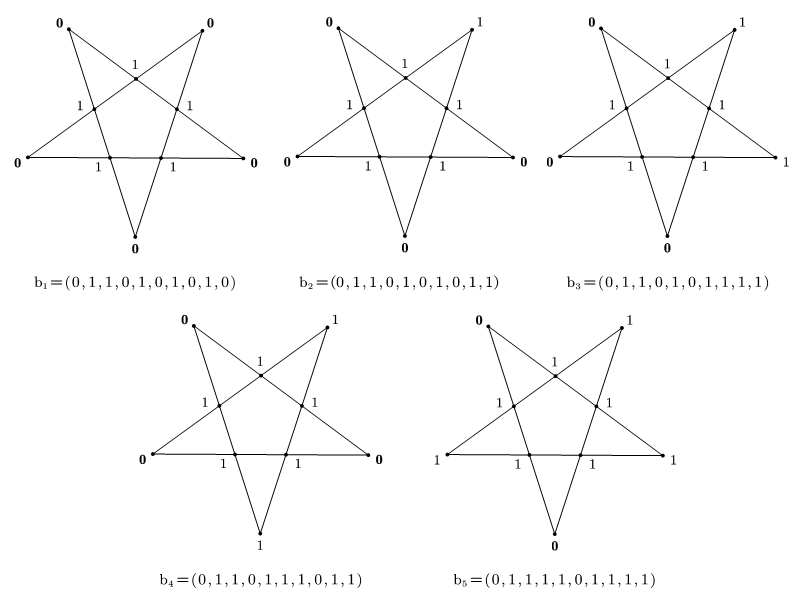}
		\caption{Five Boolean vectors $\{\mathbf{b}_i\}_{i=1}^5$ corresponding to the logically contextual states on $\mathcal{Q}_{\mathrm{KCBS}}$.}
		\label{KCBS-5}
	\end{figure}
	All $\mathbf{b}_i$ are possibilistic collapses of states on $\mathcal{G}_a(\mathcal{Q}_{\mathrm{KCBS}})$, so there are exactly 5 distinct types of logically contextual states. However, not every type admits a quantum realization; we must check which $\mathbf{b}_i$ correspond to possibilistic collapses of quantum states.
	
	Because the KCBS scenario is generated by rank‑1 projectors on a 3‑dimensional Hilbert space, every vertex of $\mathcal{G}_a(\mathcal{Q}_{\mathrm{KCBS}})$ is a rank‑1 projector. Take $\mathbf{b}_4$ (Fig.~\ref{KCBS-5}). If it came from a quantum state $\rho$, then $\rho(v_1)=\rho(v_4)=0$, forcing $\rho$ to be supported on the subspace orthogonal to both $v_1$ and $v_4$. The only such projector is $v_3$, so $\rho=v_3$. But $v_3$ is orthogonal to $v_2$, giving $\rho(v_2)\neq1$, contradicting $\mathbf{b}_4(v_2)=1$. Hence $\mathbf{b}_4$ is not quantum realizable. Similar arguments rule out $\mathbf{b}_1,\mathbf{b}_2,\mathbf{b}_3$.
	
	Now consider $\mathbf{b}_5$. If it arises from a quantum state $\rho$, then $\rho(v_1)=\rho(v_6)=0$, so $\rho=\ket{\psi}\!\bra{\psi}$ must be a pure state orthogonal to both $v_1$ and $v_6$. Using the explicit construction of the KCBS scenario from \cite{Adan2010Contextuality} (with $v_i=\ket{v_i}\!\bra{v_i}$): 
	\begin{align*}
			\ket{v_2}&= (1, 0, \sqrt{\cos(\pi/5)})^T, \\
			\ket{v_3}&= (\cos(4\pi/5), -\sin(4\pi/5), \sqrt{\cos(\pi/5)})^T.\\
			\ket{v_5}&= (\cos(2\pi/5), \sin(2\pi/5), \sqrt{\cos(\pi/5)})^T, \\
			\ket{v_7}&= (\cos(2\pi/5), -\sin(2\pi/5), \sqrt{\cos(\pi/5)})^T, \\
			\ket{v_9}&= (\cos(4\pi/5), \sin(4\pi/5), \sqrt{\cos(\pi/5)})^T.
	\end{align*}
	
	The vectors orthogonal to the required contexts are obtained via the cross product:
	\begin{align*}
		\ket{v_1}&=\ket{v_2}\times\ket{v_3}=(\sqrt{\cos(\pi/5)\sin(\pi/5)},-\sqrt{\cos(\pi/5)}(\cos(\pi/5)+1), -\sin(\pi/5))^T,\\
		\ket{v_6}&=\ket{v_5}\times\ket{v_7}=(2\sin(2\pi/5)\sqrt{\cos(\pi/5)}, 0, -\sin(\pi/5))^T,
	\end{align*}
	
	Let $\ket{\psi}:=\ket{v_1}\times\ket{v_6}$. Numerical computation shows that the possibilistic collapse of $\ket{\psi}$ equals $\mathbf{b}_5$, so $\ket{\psi}$ is the unique logically contextual quantum state in this KCBS realization. From Theorem~\ref{thm-LCiffLHP}, $\{v_9, \neg v_1, \neg v_6\}$ is a logical Hardy‑type paradox on $\mathcal{Q}_{\mathrm{KCBS}}$ induced by $\mathbf{b}_5$ with success probability $p(v_9)$.  
	
	The vertices $v_9$, $v_1$, $v_6$ form a triangle in the atom graph (Fig.~\ref{fig-atom_graph_kcbs}). By rotational symmetry of $\mathcal{G}_a(\mathcal{Q}_{\mathrm{KCBS}})$,  we obtain other four equivalent paradoxes, giving five equivalent types in total.
	\begin{align*}
		\mathrm{H}_1&=\{v_9,\neg v_1,\neg v_6\}, \\
		\mathrm{H}_2&=\{v_2,\neg v_4,\neg v_8\},\\
		\mathrm{H}_3&=\{v_3,\neg v_6,\neg v_{10}\}, \\
		\mathrm{H}_4&=\{v_5,\neg v_1,\neg v_8\}, \\
		\mathrm{H}_5&=\{v_7,\neg v_4,\neg v_{10}\}.
	\end{align*}
	
	For the explicit KCBS realization in \cite{Adan2010Contextuality}, the unique quantum state satisfying $\rho(v_1)=\rho(v_6)=0$ and $\rho(v_9)>0$ is a pure state $\rho=\ket{\psi}\!\bra{\psi}$. Direct calculation yields the success probability $\mathrm{SP}_1=\rho(v_9)\approx10.56\%$; the other four paradoxes give the same value.
	
	While the maximal success probability for Hardy‑type paradoxes in general $5$-cycle quantum scenarios is known to be $1/9\approx11.11\%$ \cite{Santos2021Conditions}, the maximum achievable for logical Hardy‑type paradoxes on the specific KCBS scenario $\mathcal{Q}_{\mathrm{KCBS}}$ remains to be determined.

	\section{Conclusion and Outlook}\label{sec-conclusion}
	
	We introduce a logical formalism for Hardy-type paradoxes within the exclusive partial Boolean algebra (epBA) framework. We prove that a general system exhibits logical contextuality if and only if it witnesses a logical Hardy-type paradox. This result generalizes earlier work restricted to specific scenarios, such as $(2,k,2)$ and $(2,2,d)$ Bell scenarios \cite{Mansfield2012Hardy} and $n$-cycle scenarios \cite{Santos2021Conditions}. In particular, a system is strongly contextual precisely when it admits a logical Hardy-type paradox with success probability $\mathrm{SP} = 1$.
	
	Using incidence matrices \cite{Abramsky2011sheaf} and atom graphs \cite{Liu2025Atom}, we classify 10 types of quantum-observable Hardy-type paradoxes in the $(2,2,2)$ Bell scenario, one of which recovers the original Hardy paradox. We also identify the unique logical Hardy-type paradox in the KCBS scenario $\mathcal{Q}_{\mathrm{KCBS}}$, achieving $\mathrm{SP} \approx 10.56\%$ for a specific parameter setting.
	
	A probabilistic relaxation leads to Cabello-type paradoxes \cite{Cabello2002Bell,Liang2005Nonlocality,Kunkri2006Nonlocality,Chen2024Hardy}, which rely on statistical inequalities rather than the inequality-free condition of Hardy-type paradoxes. Our framework can be extended to encompass Cabello-type paradoxes by a natural modification of Definition~\ref{def-LHp}: we require two events $e_k, e_l \in\{e_i\}_{i=1}^n$ with $p(e_k) > 0$, $p(e_l) > 0$, and the degree of success $p(e_k) - p(\neg e_l) > 0$ (or $p(e_l) - p(\neg e_k) > 0$). This defines a logical Cabello paradox, extending the approach to probabilistic scenarios where the inequality-free condition fails.
	
	This work focuses on scenarios satisfying Specker’s principle, such as ideal measurements. Extending the analysis to general measurements may yield further results on Hardy-type paradoxes.
	
	\bibliography{math}

@article{Mermin1994Quantum,
    author = {Mermin, N. David},
    title = {Quantum mysteries refined},
    journal = {American Journal of Physics},
    volume = {62},
    number = {10},
    pages = {880-887},
    year = {1994},
    month = {10},
    issn = {0002-9505},
    doi = {10.1119/1.17733},
    url = {https://doi.org/10.1119/1.17733}
   }

@article{Zhou2024Quantum,
    author = {Zhou, Heng and Wang, Yongjun and Wang, Baoshan and Yan, Jian},
    title = {Quantum Interpretation of Semantic Paradox: Contextuality and Superposition},
    journal = {Studia Logica},
    year = {2024},
    issn = {1572-8730},
    doi = {10.1007/s11225-024-10150-0},
    url = {https://doi.org/10.1007/s11225-024-10150-0}
   }

@article{Fritz2013Local,
    author = {Fritz, T. and Sainz, A.B. and Augusiak, R. and Brask, J Bohr and Chaves, R. and Leverrier, A. and Acín, A.},
    title = {Local orthogonality as a multipartite principle for quantum correlations},
    journal = {Nature Communications},
    volume = {4},
   Issue = {1},
    pages = {2263},
    year = {2013},
    issn = {2041-1723},
    doi = {10.1038/ncomms3263},
    url = {https://doi.org/10.1038/ncomms3263}
   }

@article{Mansfield2013The,
  title={The Mathematical Structure of Non- locality and Contextuality},
  author={Mansfield, Shane},
  year={2013},
  publisher={D. Phil. thesis ‚Oxford University}
}

@Inbook{Greenberger1989Going,
author="Greenberger, Daniel M. and Horne, Michael A. and Zeilinger, Anton",
editor="Kafatos, Menas",
title="Going Beyond Bell's Theorem",
bookTitle="Bell's Theorem, Quantum Theory and Conceptions of the Universe",
year="1989",
publisher="Springer Netherlands",
address="Dordrecht",
pages="69--72",
doi="10.1007/978-94-017-0849-4_10",
url="https://doi.org/10.1007/978-94-017-0849-4_10"
}

@article{Cabello1997Bell,
title = "Bell-{K}ochen-{S}pecker theorem: A proof with 18 vectors",
journal = "Physics Letters A",
volume = "212",
number = "4",
pages = "183--187",
year = "1996",
issn = "0375-9601",
doi = "https://doi.org/10.1016/0375-9601(96)00134-X",
url = "https://www.sciencedirect.com/science/article/pii/037596019600134X",
author = "Ad\'an Cabello and Jos\'eM. Estebaranz and Guillermo Garc\'ia-Alcaine"
}

@article{Specker1960Die,
 doi = "https://doi.org/10.1111/j.1746-8361.1960.tb00422.x",
 author = "Ernst Specker",
 journal = "Dialectica",
 number = "54/55",
 pages = "239--246",
 title = "Die Logik nicht gleichzeitig entscheidbarer Aussagen",
 volume = "14",
 year = "1960"
}

@article{Popescu1994Quantum,
    author = {Popescu, Sandu and Rohrlich, Daniel},
    title = {Quantum nonlocality as an axiom},
    journal = {Foundations of Physics},
    volume = {24},
    issue = {3},
    pages = {379-385},
    year = {1994},
    month = {10},
    issn = {1572-9516},
    doi = {10.1007/BF02058098},
    url = {https://doi.org/10.1007/BF02058098}
   }

@article{Liu2025The,
doi = {10.1088/1751-8121/ae38a3},
url = {https://doi.org/10.1088/1751-8121/ae38a3},
year = {2026},
month = {jan},
publisher = {IOP Publishing},
volume = {59},
number = {4},
pages = {045301},
author = {Liu, Songyi and Wang, Yongjun and Wang, Baoshan and He, Chang and Wang, Jincheng},
title = {The logical structure of contextuality and nonclassicality},
journal = {Journal of Physics A: Mathematical and Theoretical}
}

@misc{Adan2010Contextuality,
      title={(Non-)Contextuality of Physical Theories as an Axiom}, 
      author={Ad\'{a}n Cabello and Simone Severini and Andreas Winter},
      year={2010},
      eprint={1010.2163},
      archivePrefix={arXiv},
      primaryClass={quant-ph},
      url={https://arxiv.org/abs/1010.2163}
}

@article{Mansfield2012Hardy,
    author = {Mansfield, Shane and Fritz, Tobias},
    title = {Hardy’s Non-locality Paradox and Possibilistic Conditions for Non-locality},
    journal = {Foundations of Physics},
    volume = {42},
    issue = {5},
    pages = {709-719},
    year = {2012},
    month = {10},
    issn = {1572-9516},
    doi = {10.1007/s10701-012-9640-1},
    url = {https://doi.org/10.1007/s10701-012-9640-1}
   }

@article{Stapp1982Mind,
    author = {Stapp, Henry P.},
    title = {Mind, matter, and quantum mechanics},
    journal = {Foundations of Physics},
volume = {12},
Issue = {4},
    pages = {363-399},
    year = {1982},
    issn = {1572-9516},
    doi = {10.1007/BF00726783},
    url = {https://doi.org/10.1007/BF00726783}
   }

@article{Liang2011Specker,
title = {Specker's parable of the overprotective seer: A road to contextuality, nonlocality and complementarity},
journal = {Physics Reports},
volume = {506},
number = {1},
pages = {1-39},
year = {2011},
issn = {0370-1573},
doi = {https://doi.org/10.1016/j.physrep.2011.05.001},
url = {https://www.sciencedirect.com/science/article/pii/S0370157311001517},
author = {Yeong-Cherng Liang and Robert W. Spekkens and Howard M. Wiseman}
}

@article{Van2012Noncommutativity,
    author = {van den Berg, Benno and Heunen, Chris},
    title = {Noncommutativity as a Colimit},
    journal = {Applied Categorical Structures},
volume = {20},
Issue = {4},
    pages = {393-414},
    year = {2012},
    issn = {1572-9095},
    doi = {10.1007/s10485-011-9246-3},
    url = {https://doi.org/10.1007/s10485-011-9246-3}
   }

@article{Brunner2014Bell,
  title = {Bell nonlocality},
  author = {Brunner, Nicolas and Cavalcanti, Daniel and Pironio, Stefano and Scarani, Valerio and Wehner, Stephanie},
  journal = {Rev. Mod. Phys.},
  volume = {86},
  issue = {2},
  pages = {419--478},
  numpages = {60},
  year = {2014},
  month = {Apr},
  publisher = {American Physical Society},
  doi = {10.1103/RevModPhys.86.419},
  url = {https://link.aps.org/doi/10.1103/RevModPhys.86.419}
}

@article{Adan2014Graph,
 title = "Graph-Theoretic Approach to Quantum Correlations",
  author = "Cabello, Ad\'an and Severini, Simone and Winter, Andreas",
  journal = "Phys. Rev. Lett.",
  volume = "112",
  issue = "4",
  pages = "040401",
  numpages = "5",
  year = "2014",
  month = "Jan",
  publisher = "American Physical Society",
  doi = "10.1103/PhysRevLett.112.040401",
  url = "https://link.aps.org/doi/10.1103/PhysRevLett.112.040401"
}

@article{Budroni2022Kochen,
  title = "Kochen-Specker contextuality",
  author = "Budroni, Costantino and Cabello, Ad\'an and Guhne, Otfried and Kleinmann, Matthias and Larsson, Jan-\AA{}ke",
  journal = "Rev. Mod. Phys.",
  volume = "94",
  issue = "4",
  pages = "045007",
  numpages = "62",
  year = "2022",
  month = "Dec",
  publisher = "American Physical Society",
  doi = "10.1103/RevModPhys.94.045007",
  url = "https://link.aps.org/doi/10.1103/RevModPhys.94.045007"
}

@article{Bell1964On,
  title={On the Einstein Podolsky Rosen paradox},
  author={Bell, John S},
  journal={Physics Physique Fizika},
  volume={1},
  number={3},
  pages={195},
  year={1964},
  publisher={APS}
}

@article{Clauser1969Proposed,
  title = "Proposed Experiment to Test Local Hidden-Variable Theories",
  author = "Clauser, John F. and Horne, Michael A. and Shimony, Abner and Holt, Richard A.",
  journal = "Phys. Rev. Lett.",
  volume = "23",
  issue = "15",
  pages = "880--884",
  numpages = "0",
  year = "1969",
  month = "Oct",
  publisher = "American Physical Society",
  doi = "10.1103/PhysRevLett.23.880",
  url = "https://link.aps.org/doi/10.1103/PhysRevLett.23.880"
}

@article{Hardy1992Quantum,
  title = {Quantum mechanics, local realistic theories, and Lorentz-invariant realistic theories},
  author = {Hardy, Lucien},
  journal = {Phys. Rev. Lett.},
  volume = {68},
  issue = {20},
  pages = {2981--2984},
  numpages = {0},
  year = {1992},
  month = {May},
  publisher = {American Physical Society},
  doi = {10.1103/PhysRevLett.68.2981},
  url = {https://link.aps.org/doi/10.1103/PhysRevLett.68.2981}
}

@article{Hardy1993Nonlocality,
  title = {Nonlocality for two particles without inequalities for almost all entangled states},
  author = {Hardy, Lucien},
  journal = {Phys. Rev. Lett.},
  volume = {71},
  issue = {11},
  pages = {1665--1668},
  numpages = {0},
  year = {1993},
  month = {Sep},
  publisher = {American Physical Society},
  doi = {10.1103/PhysRevLett.71.1665},
  url = {https://link.aps.org/doi/10.1103/PhysRevLett.71.1665}
}

@article{Boschi1997Ladder,
  title = {Ladder Proof of Nonlocality without Inequalities: Theoretical and Experimental Results},
  author = {Boschi, D. and Branca, S. and De Martini, F. and Hardy, L.},
  journal = {Phys. Rev. Lett.},
  volume = {79},
  issue = {15},
  pages = {2755--2758},
  numpages = {0},
  year = {1997},
  month = {Oct},
  publisher = {American Physical Society},
  doi = {10.1103/PhysRevLett.79.2755},
  url = {https://link.aps.org/doi/10.1103/PhysRevLett.79.2755}
}

@article{Barbieri2005Test,
  title = {Test of quantum nonlocality by full collection of polarization entangledphoton pairs},
  author = {Barbieri, M. and Cinelli, C. and De Martini, F. and Mataloni, P.},
  journal = {The European Physical Journal D - Atomic, Molecular, Optical and Plasma Physics},
  volume = {32},
  issue = {2},
  pages = {261-267},
  Issn = {1434-6079},
  year = {2005},
  doi = {10.1140/epjd/e2004-00199-6},
  url = {https://doi.org/10.1140/epjd/e2004-00199-6}
}

@article{Luo2018Experimental,
title = {Experimental test of generalized Hardy’s paradox},
journal = {Science Bulletin},
volume = {63},
number = {24},
pages = {1611-1615},
year = {2018},
issn = {2095-9273},
doi = {https://doi.org/10.1016/j.scib.2018.11.025},
url = {https://www.sciencedirect.com/science/article/pii/S2095927318305565},
author = {Yi-Han Luo and Hong-Yi Su and He-Liang Huang and Xi-Lin Wang and Tao Yang and Li Li and Nai-Le Liu and Jing-Ling Chen and Chao-Yang Lu and Jian-Wei Pan}
}

@article{Meng2018Hardy,
  title = {Hardy's paradox for multisetting high-dimensional systems},
  author = {Meng, Hui-Xian and Zhou, Jie and Xu, Zhen-Peng and Su, Hong-Yi and Gao, Ting and Yan, Feng-Li and Chen, Jing-Ling},
  journal = {Phys. Rev. A},
  volume = {98},
  issue = {6},
  pages = {062103},
  numpages = {6},
  year = {2018},
  month = {Dec},
  publisher = {American Physical Society},
  doi = {10.1103/PhysRevA.98.062103},
  url = {https://link.aps.org/doi/10.1103/PhysRevA.98.062103}
}

@article{Liang2005Nonlocality,
title = {Nonlocality without inequalities for two-qubit mixed states based on Cabello's nonlocality},
journal = {Physics Letters A},
volume = {335},
number = {5},
pages = {371-373},
year = {2005},
issn = {0375-9601},
doi = {https://doi.org/10.1016/j.physleta.2004.12.046},
url = {https://www.sciencedirect.com/science/article/pii/S0375960104017591},
author = {Lin-mei Liang and Cheng-zu Li}
}

@article{Cabello2002Bell,
  title = {Bell's theorem with and without inequalities for the three-qubit Greenberger-Horne-Zeilinger and W states},
  author = {Cabello, Ad\'an},
  journal = {Phys. Rev. A},
  volume = {65},
  issue = {3},
  pages = {032108},
  numpages = {4},
  year = {2002},
  month = {Feb},
  publisher = {American Physical Society},
  doi = {10.1103/PhysRevA.65.032108},
  url = {https://link.aps.org/doi/10.1103/PhysRevA.65.032108}
}

@article{Kunkri2006Nonlocality,
  title = {Nonlocality without inequality for almost all two-qubit entangled states based on Cabello's nonlocality argument},
  author = {Kunkri, Samir and Choudhary, Sujit K. and Ahanj, Ali and Joag, Pramod},
  journal = {Phys. Rev. A},
  volume = {73},
  issue = {2},
  pages = {022346},
  numpages = {6},
  year = {2006},
  month = {Feb},
  publisher = {American Physical Society},
  doi = {10.1103/PhysRevA.73.022346},
  url = {https://link.aps.org/doi/10.1103/PhysRevA.73.022346}
}

@article{Cabello1998Ladder,
  title = {Ladder proof of nonlocality without inequalities and without probabilities},
  author = {Cabello, Ad\'an},
  journal = {Phys. Rev. A},
  volume = {58},
  issue = {3},
  pages = {1687--1693},
  numpages = {0},
  year = {1998},
  month = {Sep},
  publisher = {American Physical Society},
  doi = {10.1103/PhysRevA.58.1687},
  url = {https://link.aps.org/doi/10.1103/PhysRevA.58.1687}
}

@article{Hou2021Quantum,
  title = {Quantum Simulation of Generalized Hardy’s Paradox and Corresponding Hardy’s Inequality via Quantum Programming},
  author = {Hou, Tong and Ding, Dong and Wang, Can. and Zhang, Xiao-chun and He, Ying-qiu},
  journal = {International Journal of Theoretical Physics},
  volume = {60},
  issue = {5},
  pages = {1972-1979},
  Issn = {1572-9575},
  year = {2021},
  doi = {10.1007/s10773-021-04815-y},
  url = {https://doi.org/10.1007/s10773-021-04815-y}
}

@article{Chen2013Hardy,
  title = {Hardy's paradox for high-dimensional systems},
  author = {Chen, Jing-Ling and Cabello, Ad\'an and Xu, Zhen-Peng and Su, Hong-Yi and Wu, Chunfeng and Kwek, L. C.},
  journal = {Phys. Rev. A},
  volume = {88},
  issue = {6},
  pages = {062116},
  numpages = {5},
  year = {2013},
  month = {Dec},
  publisher = {American Physical Society},
  doi = {10.1103/PhysRevA.88.062116},
  url = {https://link.aps.org/doi/10.1103/PhysRevA.88.062116}
}

@article{Cabello2013Simple,
  title = {Simple Hardy-Like Proof of Quantum Contextuality},
  author = {Cabello, Ad\'an and Badziag, Piotr and Terra Cunha, Marcelo and Bourennane, Mohamed},
  journal = {Phys. Rev. Lett.},
  volume = {111},
  issue = {18},
  pages = {180404},
  numpages = {4},
  year = {2013},
  month = {Oct},
  publisher = {American Physical Society},
  doi = {10.1103/PhysRevLett.111.180404},
  url = {https://link.aps.org/doi/10.1103/PhysRevLett.111.180404}
}

@article{Marques2014Experimental,
  title = {Experimental Observation of Hardy-Like Quantum Contextuality},
  author = {Marques, Breno and Ahrens, Johan and Nawareg, Mohamed and Cabello, Ad\'an and Bourennane, Mohamed},
  journal = {Phys. Rev. Lett.},
  volume = {113},
  issue = {25},
  pages = {250403},
  numpages = {5},
  year = {2014},
  month = {Dec},
  publisher = {American Physical Society},
  doi = {10.1103/PhysRevLett.113.250403},
  url = {https://link.aps.org/doi/10.1103/PhysRevLett.113.250403}
}

@article{Irvine2005Realization,
  title = {Realization of Hardy's Thought Experiment with Photons},
  author = {Irvine, William T. M. and Hodelin, Juan F. and Simon, Christoph and Bouwmeester, Dirk},
  journal = {Phys. Rev. Lett.},
  volume = {95},
  issue = {3},
  pages = {030401},
  numpages = {4},
  year = {2005},
  month = {Jul},
  publisher = {American Physical Society},
  doi = {10.1103/PhysRevLett.95.030401},
  url = {https://link.aps.org/doi/10.1103/PhysRevLett.95.030401}
}

@article{Kochen1967The,
author = "Kochen, Simon and Specker, E. P.",
title = "The problem of hidden variables in quantum mechanics",
journal = "J. Math. Mech.",
volume = "17",
year = "1967",
issue = "1",     
pages = "59--87"   
}

@article{Cereceda2004Hardy,
title = {Hardy's nonlocality for generalized n-particle GHZ states},
journal = {Physics Letters A},
volume = {327},
number = {5},
pages = {433-437},
year = {2004},
issn = {0375-9601},
doi = {https://doi.org/10.1016/j.physleta.2004.06.004},
url = {https://www.sciencedirect.com/science/article/pii/S0375960104007777},
author = {Jos\'e L Cereceda},
}

@article{Chen2024Hardy,
  title = {Hardy-type paradoxes for an arbitrary symmetric bipartite Bell scenario},
  author = {Chen, Kai-Siang and Mal, Shiladitya and Tabia, Gelo Noel M. and Liang, Yeong-Cherng},
  journal = {Phys. Rev. A},
  volume = {109},
  issue = {4},
  pages = {042206},
  numpages = {9},
  year = {2024},
  month = {Apr},
  publisher = {American Physical Society},
  doi = {10.1103/PhysRevA.109.042206},
  url = {https://link.aps.org/doi/10.1103/PhysRevA.109.042206}
}

@article{Minh2023Increased,
  title = {Increased success probability in Hardy's nonlocality: Theory and demonstration},
  author = {Tran, Duc Minh and Nguyen, Van-Duy and Ho, Le Bin and Nguyen, Hung Q.},
  journal = {Phys. Rev. A},
  volume = {107},
  issue = {4},
  pages = {042210},
  numpages = {9},
  year = {2023},
  month = {Apr},
  publisher = {American Physical Society},
  doi = {10.1103/PhysRevA.107.042210},
  url = {https://link.aps.org/doi/10.1103/PhysRevA.107.042210}
}

@article{Santos2021Conditions,
  title = {Conditions for logical contextuality and nonlocality},
  author = {Santos, Leonardo and Amaral, Barbara},
  journal = {Phys. Rev. A},
  volume = {104},
  issue = {2},
  pages = {022201},
  numpages = {13},
  year = {2021},
  month = {Aug},
  publisher = {American Physical Society},
  doi = {10.1103/PhysRevA.104.022201},
  url = {https://link.aps.org/doi/10.1103/PhysRevA.104.022201}
}

@article{Abramsky2011sheaf,
doi = {10.1088/1367-2630/13/11/113036},
url = {https://dx.doi.org/10.1088/1367-2630/13/11/113036},
year = {2011},
month = {nov},
publisher = {IOP Publishing},
volume = {13},
number = {11},
pages = {113036},
author = {Abramsky, Samson and Brandenburger, Adam},
title = {The sheaf-theoretic structure of non-locality and contextuality},
journal = {New Journal of Physics},
}

@article{Silva2017Graph,
  title = {Graph-theoretic strengths of contextuality},
  author = {de Silva, Nadish},
  journal = {Phys. Rev. A},
  volume = {95},
  issue = {3},
  pages = {032108},
  numpages = {9},
  year = {2017},
  month = {Mar},
  publisher = {American Physical Society},
  doi = {10.1103/PhysRevA.95.032108},
  url = {https://link.aps.org/doi/10.1103/PhysRevA.95.032108}
}

@article{Alexander2008Simple,
 title = "Simple Test for Hidden Variables in Spin-1 Systems",
  author = "Klyachko, Alexander A. and Can, M. Ali and Binicioglu, Sinem and Shumovsky, Alexander S.",
  journal = "Phys. Rev. Lett.",
  volume = "101",
  issue = "2",
  pages = "020403",
  numpages = "4",
  year = "2008",
  month = "Jul",
  publisher = "American Physical Society",
  doi = "10.1103/PhysRevLett.101.020403",
  url = "https://link.aps.org/doi/10.1103/PhysRevLett.101.020403"
}

@article{Birkhoff1936The,
 ISSN = "0003486X",
 URL = "http://www.jstor.org/stable/1968621",
 author = "Garrett Birkhoff and John Von Neumann",
 journal = "Annals of Mathematics",
 number = "4",
 pages = "823--843",
 publisher = "Annals of Mathematics",
 title = "The Logic of Quantum Mechanics",
 urldate = "2023-04-01",
 volume = "37",
 year = "1936"
}

@article{Liu2025Atom,
title={Atom graph, partial Boolean algebra and quantum contextuality}, 
author={Songyi Liu and Yongjun Wang and Baoshan Wang and Jian Yan and Heng Zhou},
year={2025},
journal = {Quantum Information Processing},
volume = {24},
pages={12},
issue = {1},
Issn={1573-1332},
doi = {10.1007/s11128-024-04632-2},
url={https://doi.org/10.1007/s11128-024-04632-2}
}

@InProceedings{Abramsky2020The,
  author =	{Abramsky, Samson and Barbosa, Rui Soares},
  title =	{The Logic of Contextuality},
  booktitle =	{29th EACSL Annual Conference on Computer Science Logic (CSL 2021)},
  pages =	{5:1--5:18},
  series =	{Leibniz International Proceedings in Informatics (LIPIcs)},
  ISBN =	{978-3-95977-175-7},
  ISSN =	{1868-8969},
  year =	{2021},
  volume =	{183},
  editor =	{Baier, Christel and Goubault-Larrecq, Jean},
  publisher =	{Schloss Dagstuhl -- Leibniz-Zentrum f{\"u}r Informatik},
  address =	{Dagstuhl, Germany},
  URL =		{https://drops.dagstuhl.de/entities/document/10.4230/LIPIcs.CSL.2021.5},
  URN =		{urn:nbn:de:0030-drops-134394},
  doi =		{10.4230/LIPIcs.CSL.2021.5},
}
	
\end{document}